\documentclass[12pt]{article}
\usepackage{times}
\usepackage{graphicx}
\usepackage{amsfonts}
\usepackage{amsmath, amsthm, amssymb}
\usepackage{dsfont}
\usepackage{ucs}
\usepackage[utf8x]{inputenc}
\usepackage{color}
\usepackage{hyperref}
\usepackage{ifthen}
\title{Hebbian learning of recurrent connections: a geometrical perspective}
\author{Mathieu N. Galtier \footnote{Corresponding author: mathieu.galtier@inria.fr. NeuroMathComp Project Team, INRIA Sophia-Antipolis M\'editerran\'ee, 2004 route des Lucioles-BP 93, 06902 Sophia Antipolis, France} \and Olivier D. Faugeras \footnote{NeuroMathComp Project Team, INRIA Sophia-Antipolis M\'editerran\'ee, 2004 route des Lucioles-BP 93, 06902 Sophia Antipolis, France} \and Paul C. Bressloff \footnote{Department of Mathematics, University of Utah, 155 South 1400 East, Salt Lake City, Utah 84112, USA. Mathematical Institute, University of Oxford, 24-29 St. Giles', Oxford OX1 3LB, UK}}

\def \Z {\mathcal{Z}}
\def \C {\mathbb{C}}
\def \H {\mathcal{H}}
\def \I {\mathcal{I}}

\def \H {\mathcal{H}}

\def \O {\mathcal{O}}
\def \R {\mathbb{R}}
\def \N {\mathbb{N}}

\def \T {T} 

\def \eps {\epsilon}

\newtheorem{thm}{Theorem}[section]

\newboolean{DisplaySOS}
\setboolean{DisplaySOS}{false} 
\newcommand{\SOS}[1]{\ifthenelse{\boolean{DisplaySOS}}{{\textcolor{red}{\bf[#1]}}}{}}

\begin{document}
 \maketitle
\paragraph{Abstract:}
We show how a Hopfield network with modifiable recurrent connections undergoing slow Hebbian learning can extract the underlying geometry of an input space.
First, we use a slow/fast analysis to derive an averaged system whose dynamics derives from an energy function and therefore always converges to equilibrium points. The equilibria reflect the correlation structure of the inputs, a global object extracted through local recurrent interactions only. Second, we use numerical methods to illustrate how learning extracts the hidden geometrical structure of the inputs. Indeed, multidimensional scaling methods make it possible to project the final connectivity matrix on  to a distance matrix in a high-dimensional space, with the neurons labelled by spatial position within this space. The resulting network structure turns out to be roughly convolutional. The residual of the projection defines the non-convolutional part of the connectivity which is minimized in the process. Finally, we show how restricting the dimension of the space where the neurons live gives rise to patterns similar to cortical maps. We motivate this using an energy efficiency argument based on wire length minimization. Finally, we show how this approach leads to the emergence of ocular dominance or orientation columns in primary visual cortex. In addition, we establish that the non-convolutional (or long-range) connectivity is  patchy, and is co-aligned in the case of orientation learning.

\paragraph{Keywords:} correlation-based Hebbian learning, Hopfield networks, temporal averaging, energy minimization, multidimensional scaling, cortical maps
\section{Introduction}
Activity-dependent synaptic plasticity is generally thought to be the basic cellular substrate underlying learning and memory in the brain. Donald Hebb \cite{hebb:49} postulated that learning is based on the correlated activity of synaptically connected neurons: if both neurons A and B are active at the same time, then the synapses from A to B and B to A should be strengthened proportionally to the product of the activity of A and B. However, as it stands, Hebb's learning rule diverges. Therefore, various modification of Hebb's rule have been developed, which basically take one of three forms (see \cite{gerstner-kistler:02} and \cite{dayan-abbott:01}): first, a decay term can be added to the learning rule so that each synaptic weight is able to ``forget'' what it previously learned. Second, each synaptic modification can be normalized or projected on different subspaces. These constraint--based rules may be interpreted as implementing some form of competition for energy between dendrites and axons, see \cite{miller:96, miller-mackay:96} and \cite{ooyen:01} for details. Third, a sliding threshold mechanism can be added to Hebbian learning. For instance, a post-synaptic threshold rule consists in multiplying the presynaptic activity and the subtraction of the average postsynaptic activity from its current value, which is referred as covariance learning (\cite{sejnowski1989hebb}). Probably the best known of these rules is the BCM rule \cite{bienenstock-cooper-etal:82}. It should be noted that history-based rules can also be defined without changing the qualitative dynamics of the system: instead of considering the instantaneous value of the neurons' activity, these rules consider its weighted mean over a time window (see \cite{foldiak1991learning,wallis1997optimal}). Recent experimental evidence suggests that learning may also depend upon the precise timing of action potentials \cite{bi2001synaptic}. Contrary to most Hebbian rules that only detect correlations, these rules can also encode causal relationships in the patterns of neural activation. However, the mathematical treatment of these spike timing dependent rules is much more difficult than rate based ones.

Hebbian-like learning rules have often been studied within the framework of unsupervised feedfoward neural networks \cite{oja:82, bienenstock-cooper-etal:82,miller-mackay:96,dayan-abbott:01}. They also form the basis of most weight-based models of cortical development, assuming fixed lateral connectivity (e.g. mexican hat) and modifiable vertical connections (see the review of \cite{swindale1996development})\footnote{There have only been a few computational studies that consider the joint development of lateral and vertical connections \cite{bartsch2001combined,miikkulainen-bednar-etal:05}.}. In these developmental models, the statistical structure of input correlations provides a mechanism for spontaneously breaking some underlying symmetry of the neuronal receptive fields leading to the emergence of feature selectivity. When such correlations are combined with fixed intracortical interactions, there is a simultaneous breaking of translation symmetry across cortex leading to the formation of a spatially periodic cortical feature map. A related mathematical formulation of cortical map formation has been developed in \cite{takeuchi-amari:79,bressloff:05} using the theory of self--organizing neural fields. Although very irregular, the two-dimensional cortical maps observed at a given stage of development, can be unfolded in higher dimensions to get smoother geometrical structures. Indeed, \cite{bressloff-cowan-etal:01} suggested that the network of orientation pinwheels in V1 is a direct product between a circle for orientation preference and a plane for position, based on a modification of the icecube model of Hubel and Wiesel \cite{hubel77}. From a more abstract geometrical perspective, Petitot \cite{petitot:03} has associated such a structure to a 1-jet space and used this to develop some applications to computer vision. More recently, \cite{bressloff-cowan:03} and \cite{chossat2009hyperbolic} have considered more complex geometrical structures  such as spheres and hyperbolic surfaces that incorporate additional stimulus features such as spatial frequency and textures, respectively.

In this paper, we show how geometrical structures related to the distribution of inputs can emerge through unsupervised Hebbian learning applied to recurrent connections in a rate-based Hopfield network. Throughout this paper, the inputs are presented as an external non-autonomous forcing to the system and not an initial condition as is often the case in Hopfield networks. It has previously been shown that, in the case of a single fixed input, there exists an energy function that describes the joint gradient dynamics of the activity and weight variables \cite{dong1992dynamic}. This implies that the system converges to an equilibrium during learning. We use averaging theory to generalize the above result to the case of multiple inputs, under the adiabatic assumption that Hebbian learning occurs on a much slower time scale than both the activity dynamics and the sampling of the input distribution. We then show that the equilibrium distribution of weights, when embedded  into $\R^k$ for sufficiently large integer $k$, encodes the geometrical structure of the inputs. Finally, we numerically show that the embedding of the weights in two dimensions ($k=2$) gives rise to patterns that are qualitatively similar to experimentally observed cortical maps, with the emergence of features columns and patchy connectivity.
Although the mathematical formalism we introduce here could  be extended to most of the rate-based Hebbian rules in the literature, we present the theory for Hebbian learning with decay because of the simplicity of the resulting dynamics.

Note that the use of geometrical objects to describe the emergence of connectivity patterns has previously been put forward by Amari in a different context. Based on the theory of information geometry, Amari considers the geometry of the set of all the networks and defines learning as a trajectory on this manifold for perceptron networks in the framework of supervised learning \cite{amari1998natural} or for unsupervised Boltzmann Machines \cite{amari1992information}. He uses  differential and Riemannian geometry to describe an object which is at a larger scale than the cortical maps this paper is focusing on.

Moreover, Zucker and colleagues are currently developing a non-linear dimensionality reduction approach to caracterize the statistics of natural visual stimuli (see \cite{zucker_abstract,coifman2005geometric}). Although they do not use learning neural networks and stay closer to the field of computer vision than this paper, it turns out their approach is similar to the geometrical embedding approach we are using.

The structure of the paper is as follows. In section \ref{part: Model}, we formally introduce the  model. We derive the averaged system in section \ref{part: averaging}, which then allows us to study the stability of the learning dynamics in the presence of multiple inputs by constructing an appropriate energy function. We adress stability in section \ref{part: stability}. In section \ref{part: geometry of equilibrium point} we determine the geometrical structure of the equilibrium weight distribution and show how it reflects the structure of the inputs. We also relate this approach to the emergence of cortical maps. Finally, the results are discussed in section \ref{part: discussion}.

\section{Model}
\label{part: Model}
\subsection{Neural network evolution}
A neural mass corresponds to a mesoscopic coherent group of neurons. It is convenient to consider them as building blocks, first for computational simplicity, second for their direct relationship to macroscopic measurements of the brain (EEG, MEG and Optical imaging) which average over numerous neurons, and third because one can functionally define coherent groups of neurons within  cortical columns. For each neural mass $i \in \{1..N\}$, define the mean membrane potential $V_i(t)$ at time $t$. The instantaneous population  firing rate $\nu_i(t)$ is linked to the membrane potential through the relation $\nu_i(t) = s\big(V_i(t)\big)$, where $s$ is a smooth sigmoid function. In the following, we choose
\begin{equation}
 s(v) = \frac{S_m}{1+\exp\big(-4 S'_m(v-\phi)\big)},
\end{equation}
where $S_m$, $S'_m$ and $\phi$ are respectively the maximal firing rate, the maximal slope and the offset of the sigmoid.

Consider a Hopfield network of neural masses described by the equation
\begin{equation}
 \frac{dV_i}{dt}(t)=-\alpha V_i(t)+\sum_{j=1}^N W_{ij}(t)\ s\big(V_j(t)\big)+I_i(t) .
\end{equation}
The first term roughly corresponds to the intrinsic dynamics of the neural mass: it decays exponentially to zero at a rate $\alpha$ if it receives neither external inputs nor spikes from the other neural masses. We will fix the units of time by setting $\alpha=1$. The second term corresponds to the rest of the network sending information through spikes to the given neural mass $i$, with $W_{ij}(t)$ the effective synaptic weight from neural mass $j$. The synaptic weights are time--dependent because they evolve according to a continuous time Hebbian learning rule (see below). The third term $I_i(t)$ corresponds to an external input  to neural mass $i$, e.g. information extracted by the retina or thalamo-cortical connections. We take the inputs to be piecewise constant in time, that is, at regular time intervals a new input is presented to the network. In this paper, we will assume that the inputs are chosen by peridodically cycling through a given set of $M$ inputs. An alternative approach would be to randomly select each input from a given probability distribution \cite{geman79}. It is convenient to introduce vector notation by representing the time--dependent membrane potentials by $V \in C^1(\R_+, \R^N)$, the  time--dependent external inputs by $I \in C^0(\R_+, \R^N)$, and the time--dependent network weight matrix by $W \in C^1(\R_+, \R^{N \times N})$. We can then rewrite the above system of ordinary differential equations as a single vector-valued equation
\begin{equation}
 \frac{dV}{dt}=-V+W \cdot S(V)+I ,
 \label{eq: voltage based model}
\end{equation}
where $S: \R^N \rightarrow \R^N$ corrresponds to the term by term application of the sigmoid $S$, i.e. $S(V)_i = s(V_i)$. 

\subsection{Correlation-based Hebbian learning}
\label{part: correlation-based learning}
The synaptic weights are assumed to evolve according to a correlation--based Hebbian learning rule of the form
\begin{equation}
\frac{dW_{ij}}{dt} =\epsilon ( s(V_i)s(V_j)-\mu W_{ij}),
\label{Hebb}
\end{equation}
where $\epsilon$ is the learning rate, and we have included a decay term in order to stop the weights from diverging. In order to rewrite the above equation in a more compact vector form, we introduce the 
tensor (or Kronecker) product $S(V) \otimes S(V)$ so that in component form
\begin{equation}
[S(V) \otimes S(V)]_{ij} =  S(V)_i S(V)_j,
\label{eq: correlation-based learning}
\end{equation}
where $S$ is treated as a mapping from $\R^N$ to $\R^N$. The tensor product implements Hebb's rule that synaptic modifications involve the product of postynaptic and presynaptic firing rates. We can then rewrite the combined voltage and weight dynamics as the following non--autonomous (due to time--dependent inputs) dynamical system:
\begin{equation}
 \Sigma: \quad\left\{
    \begin{array}{lcl}
	{\displaystyle \frac{dV}{dt} }&=& -V+W \cdot S(V)+I\\  \\
      {\displaystyle  \frac{dW}{dt}}&= &\eps \Big(S(V) \otimes S(V) - \mu W \Big) .
    \end{array}
\right.
\label{eq: system Sigma}
\end{equation}

Let us make few remarks about the existence  and uniqueness of solutions. First, boundedness of $S$ implies boundedness of the system $\Sigma$. More precisely, if $I$ is bounded, then the solutions are bounded. To prove this, note that the right hand side of the equation for $W$ is the sum of a bounded term and a linear decay term in $W$. Therefore, $W$ is bounded and hence the term $W\cdot S(V)$ is also bounded. The same reasoning applies to $V$. $S$ being Lipschitz continuous implies that the right hand side of the system is Lipschitz. This is sufficient to prove existence and uniqueness of the solution by applying the Cauchy-Lipschitz theorem. In the following, we will derive an averaged autonomous dynamical system $\Sigma'$, which will be well-defined for the same reasons.

\section{Averaging the system}
\label{part: averaging}
We will show that system $\Sigma$ can be approximated by an autonomous Cauchy problem which will be much more convenient to handle. This averaging method makes the most of multiple time--scales in the system. First, it is natural to consider that learning occurs on a much slower time--scale than the evolution of the membrane potentials (as determined by $\alpha$), i.e.
\begin{equation}
 \eps \ll 1.
 \label{eq: time scales mu}
\end{equation}
Second, an additional time-scale arises from the rate at which the inputs are sampled by the network. That is, the network cycles periodically through $M$ fixed inputs, with the period of cycling given by $T$. It follows that $I$ is $T$--periodic, piecewise constant. We assume that the sampling rate is also much slower than the evolution of the membrane potentials,
\begin{equation}
\frac{M}{T} \ll 1.
 \label{eq: condition slow inputs}
\end{equation}
Finally, we assume that the period $T$ is small compared to the time-scale of the learning dynamics,
\begin{equation}
\epsilon \ll \frac{1}{T}.
 \label{av}
\end{equation}
We can now simplify the system $\Sigma$ by applying Tikhonov's theorem for slow/fast systems, and then classical averaging methods for periodic systems.

\subsection{Tikhonov's theorem}
Tikhonov's theorem (\cite{tikhonov1952systems} and \cite{verhulst2007singular} for a clear introduction) deals with slow/fast systems. It says the following:
\begin{thm}
 Consider the initial value problem
$$
\begin{array}{lcl}
	\dot{x} = f(x,y,t), \ x(0) = x_0,\ x\in \R^n, t \ \in \R_+\\
        \eps \dot{y} = g(x,y,t), \ y(0) = y_0,\ y\in \R^m\\
    \end{array}
$$
Assume that:
\begin{enumerate}
 \item A unique solution of the initial value problem exists and we suppose, this holds also for
the reduced problem
$$
\begin{array}{lcl}
	\dot{x} = f(x,y,t), \ x(0) = x_0 \\
        0 = g(x,y,t)
    \end{array}
$$
with solutions $\bar{x}(t)$, $\bar{y}(t)$.

 \item The equation $0 = g(x, y, t)$ is solved by $\bar{y}(t) = \phi(x, t)$, where $\phi(x, t)$ is a continuous function
and an isolated root. Also suppose that $\bar{y}(t) = \phi(x, t)$ is an asymptotically stable solution
of the equation $\frac{dy}{d\tau} = g(x,y,\tau)$  that is uniform in the parameters $x \in \R^n$ and $t\in \R_+$.
 \item $y(0)$ is contained in an interior subset of the domain of attraction of $\bar{y}$.
\end{enumerate}
Then we have
$$
\begin{array}{lcl}
	\lim_{\eps \rightarrow 0} x_\eps(t) = \bar{x}(t),\ 0\leq t \leq L \\
	\lim_{\eps \rightarrow 0} y_\eps(t) = \bar{y}(t),\ 0\leq d \leq t \leq L \\
    \end{array}
$$
with $d$ and $L$ constants independent of $\eps$.
\end{thm}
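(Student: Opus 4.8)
The plan is to prove the theorem by a boundary-layer decomposition of the interval $[0,L]$, separating a short initial transient during which the fast variable $y$ is captured by the slow manifold $\{y=\phi(x,t)\}$ from a subsequent outer regime in which both variables track the reduced solution $(\bar x,\bar y)$. First I would introduce the fast time $\tau=t/\eps$ and rewrite the $y$-equation as $dy/d\tau=g(x,y,\eps\tau)$. On a window $\tau\in[0,\tau_0]$ the slow variable barely moves, $x(\eps\tau)=x_0+O(\eps\tau_0)$, so to leading order $y$ obeys the frozen system $dy/d\tau=g(x_0,y,0)$, whose solution converges to $\phi(x_0,0)$ by hypothesis~2 (uniform asymptotic stability of $\bar y=\phi(x,t)$) together with hypothesis~3 (the initial datum lies in the interior of the basin of attraction). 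Translating back to slow time, this shows that for every $\delta>0$ there is a $d>0$, independent of $\eps$, with $|y_\eps(d)-\phi(\bar x(d),d)|<\delta$; this is the origin of the constant $d$ in the conclusion, and a routine estimate also gives $|x_\eps(d)-\bar x(d)|=O(d)$.

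Second, on the outer interval $[d,L]$ I would show that $y_\eps$ stays trapped in a shrinking tube around the slow manifold. The uniform asymptotic stability in hypothesis~2 lets one invoke a converse Lyapunov theorem to produce a Lyapunov function $U(x,y,t)$ for the fast subsystem, satisfying $dU/d\tau\le -cU$ along frozen trajectories. Differentiating $t\mapsto U(x_\eps(t),y_\eps(t),t)$ along the full system produces, besides the good contraction term $-(c/\eps)U$, correction terms proportional to $\dot x_\eps$ and to $\partial_t$, which are $O(1)$; balancing the $O(1)$ perturbation against the $O(1/\eps)$ contraction shows that $U$, and hence the distance from $y_\eps(t)$ to $\phi(x_\eps(t),t)$, remains $o(1)$ (in fact $O(\eps)$ away from $d$) throughout $[d,L]$.

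Third, I would close the argument on the $x$-component. Writing $y_\eps(t)=\phi(x_\eps(t),t)+r_\eps(t)$ with $\|r_\eps\|_\infty=o(1)$ and substituting into the $x$-equation gives $\dot x_\eps=f\big(x_\eps,\phi(x_\eps,t),t\big)+O(\|r_\eps\|_\infty)$, a regular perturbation of the reduced equation $\dot{\bar x}=f(\bar x,\phi(\bar x,t),t)$. Since $f$ and $\phi$ are (locally) Lipschitz and all relevant quantities stay in a compact set by boundedness — the same kind of a priori bound used above for the system $\Sigma$ — Gronwall's inequality yields $|x_\eps(t)-\bar x(t)|\le C\big(\|r_\eps\|_\infty+|x_\eps(d)-\bar x(d)|\big)e^{CL}\to 0$ uniformly on $[0,L]$, while $y_\eps\to\bar y$ on $[d,L]$ follows by combining this with the tube estimate and continuity of $\phi$.

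The hard part will be the second step: making the trapping estimate rigorous uniformly on $[d,L]$ requires carefully tracking how the slow drift of $x_\eps$ and the explicit $t$-dependence perturb the frozen Lyapunov function, and since the converse Lyapunov function is only available abstractly, every bound must be phrased through its a priori constants. One must also verify that the tube around the moving manifold stays inside the basin of attraction for all sufficiently small $\eps$ — equivalently, that $d$ and $L$ can genuinely be chosen independent of $\eps$ — and it is here that hypotheses~1 and~3 (existence of the reduced solution on $[0,L]$ and the interior-basin condition) enter in an essential, non-local way.
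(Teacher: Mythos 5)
You should first note that the paper does not prove this statement at all: it is quoted verbatim as Tikhonov's theorem, with the proof delegated to the cited literature (Tikhonov, 1952; Verhulst, 2007), and it is then merely \emph{applied} to the system $\Sigma$ after the rescaling $t\to\epsilon t$. So there is no in-paper argument to compare against; what you have written is the standard singular-perturbation proof found in those references (and in Khalil's \emph{Nonlinear Systems}): a boundary-layer analysis in the fast time $\tau=t/\eps$, a converse-Lyapunov trapping estimate along the slow manifold $y=\phi(x,t)$, and a Gronwall closure on the slow variable. Your outline is faithful to that route and correctly identifies the trapping step as the delicate one. Two points would need care in a full write-up. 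First, hypothesis~2 asserts uniform \emph{asymptotic} stability, not exponential stability, so the converse Lyapunov theorem only gives a decay of the form $dU/d\tau\le -\alpha(U)$ for some class-$\mathcal{K}$ function $\alpha$ (equivalently a $\mathcal{KL}$ estimate), not $dU/d\tau\le -cU$; the comparison between the $O(1)$ drift terms and the $O(1/\eps)$ contraction must be phrased through $\alpha$, which is exactly how Tikhonov-type proofs handle it, but your linear-decay shortcut overstates the hypothesis. Second, your quantifiers on $d$ are inverted: in the theorem $d>0$ is fixed independently of $\eps$, and the correct statement is that for any fixed $d$ and any tolerance $\delta$, the boundary layer (of width $O(\eps)$ in slow time) is absorbed in $[0,d]$ once $\eps$ is small enough, rather than choosing $d$ as a function of $\delta$. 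Neither point changes the architecture of the argument, which is the same as in the sources the paper relies on.
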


In order to apply Tikhonov's theorem directly to the system $\Sigma$, we first need to rescale time according to $t \rightarrow  \epsilon t$. This gives
\begin{eqnarray*}
\epsilon \frac{dV}{dt} &=& -V+W \cdot S(V)+I\\
        \frac{dW}{dt}&= &S(V) \otimes S(V) - \mu W  .
        \end{eqnarray*}
Tikhonov's theorem then implies that solutions of $\Sigma$ are close to solutions of the reduced system (in the unscaled time variable)
\begin{equation}
 \left\{
    \begin{array}{lcl}
	V(t) & = & W \cdot S\big(V(t)\big)+I(t)\\
        \dot{W} &= &\epsilon \Big ( S(V) \otimes S(V) - \mu W \Big ),
    \end{array}
\right.
\label{eq: system after Tikhonov}
\end{equation}
provided that the dynamical systems $\Sigma$ in equation (6), and equation (\ref{eq: system after Tikhonov}) are well defined. It is easy to show that both systems are Lipschitz because of the properties of $S$. Following \cite{faugeras-grimbert-etal:08}, we know that if 
\begin{equation}
 S'_m \|W\|<1,
 \label{eq: contracting condition}
\end{equation}
then there exists an isolated root $\bar{V} : \R_+ \rightarrow \R^N$ of the equation $V=  W \cdot S(V)+I$ and $\bar{V}$ is asymptotically stable. Equation (\ref{eq: contracting condition}) corresponds to the weakly connected case. Moreover, the initial condition belongs to the basin of attraction of this single fixed point. Note that we require $\frac{M}{T} \ll 1$ so that the membrane potentials have sufficient time to approach the equilibrium associated with a given input before the next input is presented to the network. In fact, this assumption make it reasonable to neglect the transient activity dynamics due to the switching between inputs.

\subsection{Periodic averaging}
The system given by equation (\ref{eq: system after Tikhonov}) corresponds to a differential equation for $W$ with $T$-periodic forcing due to the presence of $V$ on the right--hand side. Since $T \ll \epsilon^{-1}$, we can use classical averaging methods to show that solutions of (\ref{eq: system after Tikhonov}) are close to solutions of the following autonomous system on the time-interval $[0,\frac{1}{\eps}]$ (which we suppose large because $\eps << 1$)
\begin{equation}
 \Sigma_0: \quad\left\{
    \begin{array}{lcl}
	V(t) & = & W \cdot S(V(t))+I(t)\\ \\
       {\displaystyle \frac{dW}{dt}}&= &\eps \Big(\displaystyle\frac{1}{T}\displaystyle\int_0^\T S(V(s)) \otimes S(V(s)) ds - \mu W(t)\Big) .
    \end{array}
\right.
\label{eq: system Sigma 0}
\end{equation}
It follows that solutions of $\Sigma$ are also close to solutions of $\Sigma_0$. Finding the explicit solution $V(t)$ for each input $I(t)$ is difficult and requires fixed points methods, e.g. a Picard algorithm. Therefore, we will consider yet another system $\Sigma'$ whose solutions are  also close to $\Sigma_0$ and hence $\Sigma$. In order to construct $\Sigma'$ we need to introduce some additional notation.

Let us label the $M$ inputs by $I^{(a)}, a=1,\ldots,M$ and denote by $V^{(a)}$ the fixed point solution of the equation $V^{(a)}=W\cdot S(V^{(a)})+I^{(a)}$.
Given the periodic sampling of the inputs, we can rewrite (\ref{eq: system Sigma 0}) as
\begin{eqnarray}
\begin{array}{lcl}	V^{(a)} & = &  W\cdot S(V^{(a)})+I^{(a)}\\ \\
      {\displaystyle  \frac{dW}{dt}} &= &\eps \Big({\displaystyle \frac{1}{M}}\displaystyle\sum_{a=1}^M S(V^{(a)}) \otimes S(V^{(a)})- \mu W(t)\Big) .
        \end{array}
        \label{a1}
\end{eqnarray}
If we now introduce the ${N \times M}$ matrices ${\mathcal V}$ and ${\mathcal I}$ with components ${\mathcal V}_{ia}= V_i^{(a)}$ and ${\mathcal I}_{ia}= I_i^{(a)}$, then we can eliminate the tensor product and simply write (\ref{a1}) in the matrix form
\begin{eqnarray}
\begin{array}{lcl}	{\mathcal V} & = &  W\cdot S({\mathcal V})+{\mathcal I}\\ \\
       {\displaystyle \frac{dW}{dt}}&= &\eps \Big(\displaystyle\frac{1}{M}\displaystyle S({\mathcal V}) \cdot S({\mathcal V})^T- \mu W(t)\Big) ,
        \end{array}
        \label{a2}
\end{eqnarray}
where $S({\mathcal V})\in  \R^{N \times M}$ such that $[S({\mathcal V})]_{ia}=s({V}_i^{(a)})$.
A second application of Tikhonov's theorem (in the reverse direction) then establishes that solutions of the system $\Sigma_0$ (written in the matrix form (\ref{a2})) are close to solutions of the matrix system
\begin{equation}
 \Sigma': \quad\left\{
    \begin{array}{lcl}
	{\displaystyle\frac{d{\mathcal V}}{dt} }& = & -{\mathcal V} +  W \cdot S\big({\mathcal V}\big)+{\mathcal I}\\ \\
       {\displaystyle \frac{dW}{dt}}&= &\eps\Big(\displaystyle\frac{1}{M}S({\mathcal V})\cdot S({\mathcal V})^T  - \mu W(t)\Big)
    \end{array}
\right. 
\label{eq: system Sigma prime}
\end{equation}

In the remainder of the paper we will focus on the system $\Sigma'$ whose solutions are close to those of the original system $\Sigma$ provided condition (\ref{eq: contracting condition}) is satisfied, i.e. the network is weakly connected. Clearly, the fixed points $(V^*,W^*)$ of system $\Sigma$ satisfy $\|W^*\| \leq \frac{S_m^2}{\mu}$. Therefore, equation (\ref{eq: contracting condition}) says that if $\frac{S_m^2 S'_m }{\mu}<1$ then Tikhonov's theorem can be applied and systems $\Sigma$ and $\Sigma'$ can be reasonably considered as good approximations of each other. The advantage of the averaged system $\Sigma'$ is that is given by autonomous ordinary differential equations. Moreover, since it is Lipschitz continuous, it leads to a well-posed Cauchy problem. Finally, note that it is straighforward to extend our approach to time-functional rules (e.g. sliding threshold or BCM rules as described in \cite{bienenstock-cooper-etal:82}) which, in this new framework,  would be approximated by simple ordinary differential equations (as opposed to time-functional differential equations) provided $S$ is redefined appropriately.

\begin{figure}[ht]
 \centering
 \includegraphics[width=0.5\textwidth]{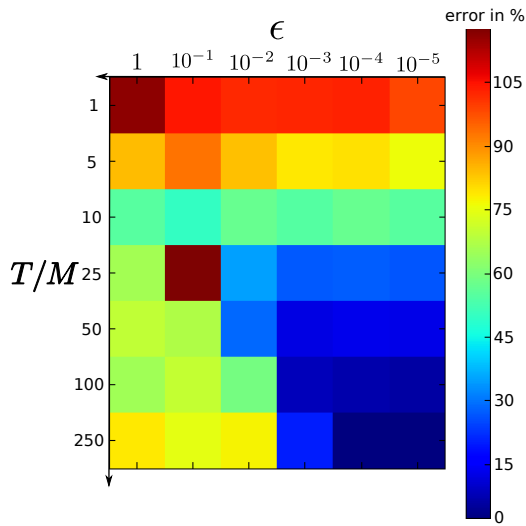}
 \caption{Percentage of error between final connectivities for the exact and averaged system.}
 \label{fig: averaging table}
\end{figure}

\begin{figure}[ht]
 \centering
 \includegraphics[width=0.5\textwidth]{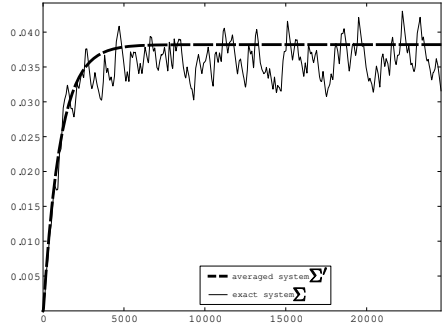}
\caption{Temporal evolution of the norm of the connectivities of the exact system $\Sigma$ and averaged system $\Sigma'$.} 
\label{fig: averaging evolution}
\end{figure}

\subsection{Simulations}
To illustrate the above approximation, we simulate a simple network with both exact, i.e. $\Sigma$, and averaged ,i.e. $\Sigma'$, evolution equations. For these simulations, the network consists of $N=10$ fully-connected neurons and is presented with $M=10$ different random inputs taken uniformly in the intervals $[0,1]^{N}$. For this simulation we use $s(x) = \frac{1}{1+e^{-4(x-1)}}$, and $\mu = 10$. Figure \ref{fig: averaging table}. shows the percentage of error between final connectivities for different values of $\eps$ and $T/M$. Figure \ref{fig: averaging evolution} shows the temporal evolution of the norm of the connectivity for both the exact and averaged system for $T = 10^3$ and $\eps = 10^{-3}$.

\section{Stability}
\label{part: stability}
\subsection{Liapunov function}
\label{part: Energy}
In the case of a single fixed input ($M=1$), the systems $\Sigma$ and $\Sigma'$ are equivalent and reduce to the neural network with adapting synapses previously analyzed by \cite{dong1992dynamic}. Under the additional constraint that the weights are symmetric ($W_{ij}=W_{ji}$), these authors showed that the simultaneous evolution of the neuronal activity variables and the synaptic weights can be re-expressed as a gradient dynamical system that minimizes a Liapunov or energy function of state. We can generalize their analysis to the case of multiple inputs ($M>1$) and non-symmetric weights using the averaged system $\Sigma'$. That is, following along similar lines to \cite{dong1992dynamic}, we introduce the energy function
\begin{equation}
E({\mathcal U},W) = -\frac{1}{2}\langle {\mathcal U},W\cdot {\mathcal U} \rangle - \langle {\mathcal I},{\mathcal U} \rangle +{\langle 1, \overline{S^{-1}}\big({\mathcal U}\big) \rangle} + \frac{M \mu}{2} \|W\|^2
\label{eq: energy learning}
\end{equation}
where ${\mathcal U}=S({\mathcal V})$, $ \|W\|^2=\langle W,W\rangle = \sum_{i,j}W_{ij}^2$,
\begin{equation}
\langle {\mathcal U},W\cdot {\mathcal U} \rangle=\sum_{a=1}^M\sum_{i=1}^N U_i^{(a)}W_{ij} U_j^{(a)},\quad 
\langle {\mathcal I},{\mathcal U} \rangle =\sum_{a=1}^M\sum_{i=1}^N I_i^{(a)}U_i^{(a)}
\end{equation}
and
\begin{equation}
 \langle 1, \overline{S^{-1}}\big({\mathcal U}\big) \rangle=\sum_{a=1}^M\sum_{i=1}^N \int_0^{U_i^{(a)}}S^{-1}(\xi)d\xi .
 \end{equation}
 In contrast to \cite{dong1992dynamic}, we do not require {\em a priori} that the weight matrix is symmetric. However, it can be shown that the system always converges to a symmetric connectivity pattern. More precisely, $\mathcal{A} = \Big\{({\mathcal V},W) \in \R^{N \times M} \times \R^{N \times N} :\ W= W^T\Big\}$ is an attractor of the system $\Sigma'$. A proof can be found in appendix \ref{part: appendix symmetric fixed points}. 
It can then be shown that on ${\mathcal A}$ (symmetric weights), $E$ is a Liapunov function of the dynamical system $\Sigma'$, that is, 
\begin{equation*}
\frac{dE}{dt} \leq 0,\quad \mbox{and} \quad \frac{dE}{dt}=0\implies \frac{d{\mathcal Y}}{dt}=0, \quad {\mathcal Y}=({\mathcal V},W)^T.
\end{equation*}
The boundedness of $E$ and the Krasovskii-LaSalle invariance principle then implies that the system converges to an equilibrium \cite{khalil-grizzle:96}. We thus have
\begin{thm}
 The initial value problem for the system $\Sigma'$ with ${\mathcal Y}(0) \in \H $, converges to an equilibrium state.
\label{thm: lasalle implies convergence}
\end{thm}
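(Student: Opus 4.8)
The plan is to establish the three ingredients behind the quoted conclusion: (i) the Cauchy problem for $\Sigma'$ has a unique global solution that stays in a fixed compact set; (ii) on the invariant attractor $\mathcal{A}$ the function $E$ is bounded below and is a strict Liapunov function for $\Sigma'$; and (iii) the Krasovskii--LaSalle invariance principle then forces the $\omega$-limit set to consist only of equilibria. I would present the argument in essentially that order.

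First I would recall, exactly as for $\Sigma$ in Section~\ref{part: Model}, that the right-hand side of $\Sigma'$ is globally Lipschitz (boundedness and Lipschitz continuity of $S$), so the solution exists, is unique and is global; moreover $W$ is trapped by the linear decay $-\epsilon\mu W$ against the bounded forcing $\frac{\epsilon}{M}S(\mathcal{V})S(\mathcal{V})^T$, and then $\mathcal{V}$ is trapped by $-\mathcal{V}$ against the bounded forcing $W\cdot S(\mathcal{V})+\mathcal{I}$; hence the orbit of any $\mathcal{Y}(0)\in\mathcal{H}$ lies in a compact set $K$. On $K$ the components of $\mathcal{U}=S(\mathcal{V})$ stay in a compact subinterval of the open range $(0,S_m)$ of $s$, so each primitive $\int_0^{U_i^{(a)}}S^{-1}(\xi)\,d\xi$ is finite there (the logarithmic singularity of $S^{-1}$ at $0$ is integrable), and $E$, being continuous, is bounded on $K$.

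The heart of the proof is the computation of $\frac{dE}{dt}$ along $\Sigma'$ on $\mathcal{A}$. I would differentiate $E(\mathcal{U},W)$ with $\mathcal{U}=S(\mathcal{V})$ by the chain rule, split the result into a $\dot{\mathcal{V}}$-part and a $\dot W$-part, and use two facts: since $U_i^{(a)}=s(V_i^{(a)})$ we have $S^{-1}(U_i^{(a)})=V_i^{(a)}$, whence $\partial E/\partial U_i^{(a)}=V_i^{(a)}-(W\mathcal{U})_i^{(a)}-I_i^{(a)}=-\dot V_i^{(a)}$; and on $\mathcal{A}$ the weight matrix is symmetric, which is precisely what makes the quadratic term $\langle\mathcal{U},W\mathcal{U}\rangle$ differentiate cleanly. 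The $\dot{\mathcal{V}}$-part then collapses to $-\sum_{i,a}\dot U_i^{(a)}\dot V_i^{(a)}=-\sum_{i,a}s'(V_i^{(a)})\,(\dot V_i^{(a)})^2\le 0$ (since $s'>0$), and the $\dot W$-part, after substituting $\dot W=\epsilon(\frac1M S(\mathcal{V})S(\mathcal{V})^T-\mu W)$ and completing the square, is a strictly negative multiple of $\|\dot W\|^2$. Hence $\frac{dE}{dt}\le 0$, with equality only when $\dot{\mathcal{V}}=0$ and $\dot W=0$, i.e.\ at an equilibrium of $\Sigma'$; this is the claim displayed just before the theorem. Because $\mathcal{A}$ is invariant and attracts every orbit (proved in the appendix), running this argument on $\mathcal{A}$ suffices to control the asymptotics of every $\mathcal{Y}(0)\in\mathcal{H}$; the reduction to symmetric data is itself routine, since $W-W^T$ decays like $e^{-\epsilon\mu t}$.

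Finally I would invoke Krasovskii--LaSalle: the orbit is precompact, $E$ is $C^1$ and bounded below with $\dot E\le 0$, so its (nonempty, compact, connected, invariant) $\omega$-limit set lies in the largest invariant subset of $\{\dot E=0\}$; by the previous paragraph that set is exactly the set of equilibria of $\Sigma'$, so $\mathcal{Y}(t)$ converges to the equilibrium set (and to a single equilibrium if the equilibria are isolated, the $\omega$-limit set being connected). I expect the only real obstacle to be the $\frac{dE}{dt}$ computation itself --- bookkeeping of the cross-terms produced by the simultaneous motion of $\mathcal{V}$ and $W$, and checking that the symmetry of $W$ is genuinely what kills the troublesome term --- together with the minor point of verifying that $E$ is bounded below, i.e.\ that trajectories never drive $\mathcal{U}$ to the saturating values $0$ or $S_m$ at which $S^{-1}$ diverges.
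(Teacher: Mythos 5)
Your overall strategy is the paper's own: the same energy \eqref{eq: energy learning}, the chain rule through $\mathcal{U}=S(\mathcal{V})$ producing the factor $s'(V_i^{(a)})$, boundedness of trajectories, and the Krasovskii--LaSalle principle; on the symmetric set $\mathcal{A}$ your computation of $\frac{dE}{dt}$ coincides with the paper's identity $\frac{d\tilde E}{dt}=-\langle S'(\mathcal{V})\,\nabla_{\mathcal{U}}E\circ\sigma,\nabla_{\mathcal{U}}E\circ\sigma\rangle-\frac{\eps}{M}\|\nabla_W E\circ\sigma\|^2$, since $-\nabla_{\mathcal{U}}E\circ\sigma=\frac{d\mathcal{V}}{dt}$ there. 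The genuine gap is your treatment of initial data off $\mathcal{A}$, which is exactly what the theorem allows (${\mathcal Y}(0)\in\H$, not ${\mathcal Y}(0)\in\mathcal{A}$). You establish $\dot E\le 0$ only on $\mathcal{A}$ and then declare the reduction ``routine'' because $W_A(t)=O(e^{-\eps\mu t})$. But LaSalle needs $\dot E\le 0$ (at least eventually, or with controllable positive part) along the trajectory actually being studied; off $\mathcal{A}$, $\dot E$ acquires the cross term $\langle\nabla_{\mathcal V}\tilde E,\,W_A\cdot S(\mathcal{V})\rangle$ of uncontrolled sign, so monotonicity can fail as long as $W_A\neq 0$. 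Nor does ``the orbit converges to $\mathcal{A}$ and the flow restricted to $\mathcal{A}$ is gradient-like'' immediately yield the conclusion: it only tells you the $\omega$-limit set is a compact connected invariant subset of $\mathcal{A}$, and such a set for a gradient-like flow may contain non-equilibrium (connecting) orbits; excluding that requires an extra argument (chain-transitivity of $\omega$-limit sets together with, e.g., isolated equilibria, or a LaSalle theorem for asymptotically gradient systems). The paper closes precisely this hole in a different way: it never restricts to $\mathcal{A}$, but writes $\frac{d\mathcal{Y}}{dt}=-\gamma\big(\nabla E\circ\sigma\big)+\Gamma(t)$ with $\Gamma(t)=(W_A\cdot S(\mathcal{V}),0)^T$ decaying exponentially, carries the resulting term $\tilde\Gamma(t)=\langle\nabla\tilde E,\Gamma(t)\rangle$ through the computation of $\frac{d\tilde E}{dt}$ along the true trajectory, argues it is eventually dominated so that $\frac{d\tilde E}{dt}\le -k\|\nabla E\circ\sigma\|^2$ for $t>t_1$, and only then invokes LaSalle. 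To repair your argument, either carry the antisymmetric forcing through the derivative in this way, or restrict the claim to symmetric initial data and treat the general case separately via appendix \ref{part: appendix symmetric fixed points}.

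A secondary point: with $E$ taken literally as in \eqref{eq: energy learning}, $\nabla_W E=\mu M\,W-\tfrac12\,\mathcal{U}\cdot\mathcal{U}^T$, which is \emph{not} proportional to $-\frac{dW}{dt}\propto \mu M\,W-\mathcal{U}\cdot\mathcal{U}^T$; one can even choose $W$ transiently so that $\langle\nabla_W E,\frac{dW}{dt}\rangle>0$. Hence your ``completing the square'' step does not, as written, produce a strictly negative multiple of $\|\frac{dW}{dt}\|^2$: the coefficient of $\|W\|^2$ must be $\mu M/4$ (equivalently, rescale the quadratic term in $\mathcal{U}$) for the $W$-equation to be a true negative gradient. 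The paper's displayed $-\nabla E$ contains the same factor slip, so this is a normalization to fix rather than a conceptual error, but your stated claim about the $\dot W$-part would fail without it.
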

\begin{proof}
See appendix \ref{part: appendix Lasalle} 
\end{proof}
\noindent It follows that neither oscillatory nor chaotic attractor dynamics can occur.

\subsection{Linear stability}
\label{part: linear stability}
Although we have shown that there are stable fixed points, not all of the fixed points are stable. However, we can apply a linear stability analysis on the system $\Sigma'$ to derive a simple sufficient condition for a fixed point to be stable. The method we use in the proof could be extended to more complex rules. The proof reveals the significant role played by the Kronecker product in Hebbian learning.

\begin{thm}
The equilibria of system $\Sigma'$ satisfy:
\begin{equation}
 \left\{
    \begin{array}{lc}
        {\mathcal V}^* = \frac{1}{\mu M} S({\mathcal V}^*)\cdot S({\mathcal V}^*)^T \cdot S({\mathcal V}^*) +{\mathcal I}\\
	W^* = \frac{1}{\mu M} S({\mathcal V}^*) \cdot S({\mathcal V}^*)^T\\
    \end{array}
\right.
\label{eq: M0 fixed points subspace}
\end{equation}
and a sufficient condition for stability is
\begin{equation}
 3S'_m \|W^*\| < 1
\label{eq: sufficient condition M0}
\end{equation}
provided $1> \eps \mu$ which is most probably the case since $\eps<<1.$
\label{thm: fixed points and linear stability}
\end{thm}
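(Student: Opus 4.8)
The plan is to handle the two assertions separately. The characterization of the equilibria is immediate: setting $\frac{d\mathcal{V}}{dt}=0$ and $\frac{dW}{dt}=0$ in $\Sigma'$ (equation (\ref{eq: system Sigma prime})), the second equation gives at once $W^*=\frac{1}{\mu M}S(\mathcal{V}^*)\cdot S(\mathcal{V}^*)^T$, and substituting this into the first equation $\mathcal{V}^*=W^*\cdot S(\mathcal{V}^*)+\mathcal{I}$ produces the stated system (\ref{eq: M0 fixed points subspace}). Note that $\mathcal{V}^*$ is indeed the relevant isolated root of the fast equation precisely when $S'_m\|W^*\|<1$, which is implied by the stronger hypothesis (\ref{eq: sufficient condition M0}); this is consistent with the weakly connected regime (\ref{eq: contracting condition}). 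For stability I would carry out a linear stability analysis of $\Sigma'$ at $(\mathcal{V}^*,W^*)$, the key idea being to eliminate the weight perturbation and thereby reduce a block eigenvalue problem on $\R^{N\times M}\times\R^{N\times N}$ to a single scalar inequality for a would-be unstable eigenvalue.

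\textbf{Linearization.} Write $\mathcal{V}=\mathcal{V}^*+\delta\mathcal{V}$, $W=W^*+\delta W$. Since $S$ acts coordinate-wise, its differential at $\mathcal{V}^*$ is the diagonal (entrywise) operator $\delta\mathcal{V}\mapsto S'(\mathcal{V}^*)\odot\delta\mathcal{V}$, where $S'(\mathcal{V}^*)$ is the matrix of the $s'(\mathcal{V}^*_{ia})$ and $\odot$ is the Hadamard product; its operator norm is at most $S'_m$ because $0\le s'\le S'_m$. To first order $\Sigma'$ becomes
\[
\frac{d\,\delta\mathcal{V}}{dt}= A\,\delta\mathcal{V}+B\,\delta W,\qquad \frac{d\,\delta W}{dt}=\eps\,C\,\delta\mathcal{V}-\eps\mu\,\delta W,
\]
with $A\delta\mathcal{V}=-\delta\mathcal{V}+W^*\cdot(S'(\mathcal{V}^*)\odot\delta\mathcal{V})$, $B\delta W=\delta W\cdot S(\mathcal{V}^*)$, and $C\delta\mathcal{V}=\frac1M\big[(S'(\mathcal{V}^*)\odot\delta\mathcal{V})\cdot S(\mathcal{V}^*)^T+S(\mathcal{V}^*)\cdot(S'(\mathcal{V}^*)\odot\delta\mathcal{V})^T\big]$. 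The crucial structural point is that $C$ is a \emph{sum of two} terms, which is exactly the product rule applied to the tensorial Hebbian term $S(\mathcal{V})\otimes S(\mathcal{V})$; this is where the Kronecker structure enters. Submultiplicativity then gives $\|A+\mathrm{Id}\|\le S'_m\|W^*\|$, $\|B\|\le\|S(\mathcal{V}^*)\|$, $\|C\|\le\frac{2S'_m}{M}\|S(\mathcal{V}^*)\|$, hence
\[
\|BC\|\le \frac{2S'_m}{M}\|S(\mathcal{V}^*)\|^2=2S'_m\mu\|W^*\|,
\]
the last equality coming from $\|S(\mathcal{V}^*)\|^2=\mu M\|W^*\|$, which is just the fixed-point identity for $W^*$ (read with the operator norm, then bounded by the Frobenius norm $\|W^*\|$ appearing in the statement).

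\textbf{The scalar inequality and the obstacle.} Let $\lambda$ be an eigenvalue of the linearization with eigenvector $(\delta\mathcal{V},\delta W)$, and suppose for contradiction $\mathrm{Re}(\lambda)\ge 0$. If $\lambda=-\eps\mu$ it already has negative real part, so assume $\lambda\ne-\eps\mu$; the weight equation then yields $\delta W=\frac{\eps}{\lambda+\eps\mu}\,C\,\delta\mathcal{V}$, and if $\delta\mathcal{V}=0$ this forces $\delta W=0$, which is impossible for an eigenvector, so $\delta\mathcal{V}\neq0$. Substituting into the activity equation gives $(\lambda+\mathrm{Id})\delta\mathcal{V}=W^*\cdot(S'(\mathcal{V}^*)\odot\delta\mathcal{V})+\frac{\eps}{\lambda+\eps\mu}BC\,\delta\mathcal{V}$, whence $|\lambda+1|\le S'_m\|W^*\|+\frac{\eps}{|\lambda+\eps\mu|}\|BC\|\le S'_m\|W^*\|\big(1+\frac{2\eps\mu}{|\lambda+\eps\mu|}\big)$. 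Since $\mathrm{Re}(\lambda)\ge0$ we have $|\lambda+1|\ge1$ and $|\lambda+\eps\mu|\ge\eps\mu$, so the right-hand side is at most $3S'_m\|W^*\|$, giving $1\le 3S'_m\|W^*\|$ and contradicting (\ref{eq: sufficient condition M0}). Hence every eigenvalue has strictly negative real part and the equilibrium is linearly asymptotically stable. The auxiliary hypothesis $\eps\mu<1$ enters only through a technical estimate (it keeps the ratio controlling $\delta W$ bounded in the expected way and guarantees that the slow weight-decay rate $\eps\mu$ stays below the fast activity-decay rate, so the block structure is genuinely slow/fast); it is innocuous since $\eps\ll1$. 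The main obstacle is not the conceptual reduction — eliminating $\delta W$ is clean — but the careful differentiation and norm bookkeeping of the matrix-valued right-hand side, in particular getting the quadratic Hebbian term's derivative to contribute exactly the factor $2$ which, together with the $1$ from the activity Jacobian, yields the constant $3$.
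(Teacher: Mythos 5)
Your proposal is correct, and its first half (the fixed-point characterization and the linearization with the elimination of the weight perturbation $\delta W$ via $(\lambda+\eps\mu)\delta W=\eps C\,\delta\mathcal{V}$) is exactly the paper's strategy in appendix C: the paper obtains the same reduced eigenvalue relation, with your $A+\mathrm{Id}$ and $BC$ terms appearing there as $(\frac{\lambda}{\mu}+\eps)S\cdot S^T\cdot(S'\mathcal{Z})$ and the two $\eps$-terms of equation (\ref{eq: necessary condition for eigenvalues of M0}). Where you genuinely diverge is the endgame. The paper flattens $\mathcal{Z}$ and invokes the Kronecker-product calculus of Brewer to compute the operator norms of $S\cdot S^T\otimes I_d$, $I_d\otimes S^T\cdot S$ and $S\otimes S^T$ exactly (each $\mu M\nu_m$, with $\nu_m$ the spectral radius of $W^*$), arriving at $|(\lambda+\eps\mu)(\lambda+1)|\le S'_m(|\lambda|+3\eps\mu)\nu_m$, and then rules out right-half-plane roots by studying the auxiliary function $f_\eps(\lambda)$ and its derivative in $\eps$ — this is precisely where the hypothesis $1>\eps\mu$ is invoked. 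You instead bound $\|BC\|\le 2S'_m\mu\|W^*\|$ by plain submultiplicativity (no flattening needed) and finish with the elementary estimates $|\lambda+1|\ge 1$, $|\lambda+\eps\mu|\ge\eps\mu$ on the closed right half-plane, reaching $1\le 3S'_m\|W^*\|$ directly. This buys two things: the Kronecker machinery becomes unnecessary (your slightly cruder bound $|\lambda+\eps\mu|+2\eps\mu\le|\lambda|+3\eps\mu$ still yields the same constant $3$), and, more importantly, the condition $1>\eps\mu$ is not used anywhere — so your closing remark that it enters "through a technical estimate" undersells your own argument: in your route it is simply superfluous, whereas in the paper it is needed to make the monotonicity-in-$\eps$ argument for $f_\eps$ go through. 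One cosmetic point: the $\|W^*\|$ of the theorem is the Frobenius norm defined in section \ref{part: Energy}, while your estimates naturally produce the spectral norm (the paper's $\nu_m$); you note correctly that the latter is dominated by the former, so the stated sufficient condition follows, but it is worth making that substitution explicit.
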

\begin{proof}
 See appendix \ref{part: appendix linear stability}
\end{proof}
\noindent This condition is strikingly similar to that derived in \cite{faugeras-grimbert-etal:08}. In fact, condition (\ref{eq: sufficient condition M0}) is stronger than the contracting condition (\ref{eq: contracting condition}). It says the network may converge to a weakly connected situation. It justifies the averaging method by saying that we remain in the domain of validity of the averaging method. It also says that the dynamics of ${\mathcal V}$ is likely (because the condition is only sufficient) to be contracting and therefore subject to no bifurcations: a fully recurrent learning neural network is likely to have a ``simple'' dynamics.

\section{Geometrical structure of equilibrium points}
\label{part: geometry of equilibrium point}
\subsection{Learning the correlation matrix of the inputs}
\label{part: equilibrium point}
It follows from equation (\ref{eq: M0 fixed points subspace}) that the equilibrium weight matrix $W^*$ is given by the correlation matrix of the firing rates. Moreover, in the case of sufficiently large inputs, the matrix of equilibrium membrane potentials satisfies ${\mathcal V}^* \approx {\mathcal I}$. More precisely, if $|S(I_i^{(a)}) |\ll |I_i^{(a)}|$ for all $a =1,\ldots ,M$ and $i=1,\ldots,N$, then we can generate an iterative solution for ${\mathcal V}^*$ of the form
$$\mathcal{V}^* = {\I} + \frac{1}{\mu} S\big(\I\big) \cdot S\big(\I\big)^T \cdot S\big(\I\big) + {\rm h. o. t.}$$
On the other hand, if the inputs are comparable in size to the synaptic weights, then there is no explicit solution for $\mathcal{V}^*$. Roughly speaking, we observe that the connection term has the role of ``smoothing'' the solution. Therefore, if a Gaussian is presented to the network (as the only input), the membrane potential is likely to be another Gaussian with a larger variance.
If  no input is presented to the network ($I=0$), then $S(0) \neq 0$ implies that the activity is non-zero, that is, there is spontaneous activity. Combining these observations,  we see that the network roughly extracts and stores the correlation matrix of the strongest inputs within the weights of the network.

\subsection{From a symmetric connectivity matrix to a convolutional network}
\label{part: multidimensional scaling}
So far neurons have been identified by a label $i \in \{1..N\}$; there is no notion of geometry or space in the preceding results. However, as we show below, the inputs may contain a spatial structure that can be encoded by the connectivity. In this section, we propose a mechanism to unveil the hidden geometrical structure within the connectivity. More specifically, we want to find an integer $k \in \N$ and $N$ points in $\R^k$, denoted by $x_i, i\in \{1,\ldots,N\}$, so that the connectivity can roughly be written as $W^*_{ij} \simeq \exp(-\|x_i - x_j\|^2)$. In other words, we interpret the final connectivity as a matrix describing the distance between the neurons living in a k-dimensional space. However, $W^*$ is not always a distance matrix, therefore, it is natural to project it on the set of distance matrices. Finding the best fit of $W^*$ to a distance matrix is usually called multidimensional scaling. This set of methods is reviewed in \cite{borg2005modern}. 

First, define $\widehat{W} \in \R^{N\times N}$ as $W^*$ whose diagonal terms are set to $W^*_{max}$ the largest component of $W^*$: $W_{ij}^* = N_{ij} \widehat{W}_{ij}$ with $N_{ij} = 1$ if $i \neq j$ and $N_{ii} = W_{ii} / W^*_{max}$. Second, define a bijective kernel function on $x\in \R_+$ such that $K(x)=W^*_{max} e^{-x/\sigma^2}$. Given that $\widehat{W}$ is non-negative, we define the matrix $D = K^{-1}(\widehat{W})$ corresponding to the application of the inverse of $K$ to each component of $\widehat{W}$. As said before, we want to find $k \in \N$ and $x_i\in \R^k$ for $i\in \{1,\ldots,N\}$ so that $D$ is a distance matrix, $D_{ij} = \|x_i - x_j\|^2$. In general,  this is not possible. However, we can compute the projection of the symmetric matrix $\widehat{W}$ onto the set of distance matrices by applying multidimensional scaling methods as described in \cite{borg2005modern}. We use the stress majorization or SMACOF algorithm for the stress1 cost function throughout the article. In other words, we can find the distance matrix $D_\bot$ such that $\|D_\shortparallel\| = \|D - D_\bot\|$ is minimal. Therefore, $\widehat{W}_{ij} = K(D_\shortparallel)_{ij}K(D_\bot)_{ij}$. Define $M$ such that $M(x_i,x_j) = K(D_\shortparallel)_{ij}\ N_{ij}$ and $G_\sigma$ a Gaussian with a standard deviation equal to $\sigma$. In spatial coordinates
\begin{equation}
W^*(x_i,x_j) = M(x_i,x_j)\ G_\sigma(\|x_i - x_j\|)
\label{eq: connectivity decomposition}
\end{equation}
Multidimensional scaling methods consist in minimizing the contribution of $M$ in the preceding equation. Hence, we refer to $M$ as the non-convolutional connectivity.

A position $x_i \in \R^k$ is associated to each neuron $i \in \{1,\ldots,N\}$ such that
\begin{equation}
\big(W\cdot S(V)\big)_i = \sum_{j = 1}^N  W^*_{ij} S(V_j) = \sum_{j=1}^N M(x_i,x_j)\ G_\sigma(\|x_i - x_j\|)\ S(V(x_j))
\end{equation}
In particular, we can assume that $k$ is large enough for $\|D_\shortparallel\|$ to be very small. Moreover, if the neurons are equally excited on average (i.e. the diagonal of $W^*$ is already equal to $W_{max}^* I_d$), then it is reasonable to consider that $M(x_i,x_j) = 1$ leading to the following convolutional product
$$W\cdot S(V) = G_\sigma(\|.\|) \ast S(V)$$ 
Therefore, in the space defined by the $x_i \in \R^k$ the connectivity is close to being convolutional.
\subsection{Unveiling the geometrical structure of the inputs}
\label{part: simulations}
We hypothesize that the space defined by the $x_i$ reflects the underlying geometrical structure of the inputs. We have not found a way to prove this so we only provide numerical examples that illustrate this claim. In the following examples, we relate the geometry of the manifold suggested by the $x_i$ to the network inputs. Thus we feed the network with inputs having a defined geometrical structure and then show how this structure can be extracted from the connectivity by the method above. However, it is by extracting the structure from unknown inputs that these networks might reveal themselves useful. Therefore, the following is only a (numerical) proof of concept.

We assume the inputs to be uniformly distributed over a manifold $\Omega$ with fixed geometry. This strong assumption amounts to considering that the feedforward connectivity (which we do not consider here) has already properly filtered the information coming from the sensory organs. More precisely, define the set of inputs by the matrix $\I \in \R^{N \times M}$ such that $I_{i}^{(a)} = f(\|y_i - z_a\|_{\Omega})$ where the $z_a$ are uniformly distributed points over $\Omega$, the $y_i$ are the positions on $\Omega$ that ``label'' the $i$th neuron, and $f$ is a decreasing function on $\R_+$. The norm $\|.\|_\Omega$ is the natural norm defined over the manifold $\Omega$. For simplicity, assume $f(x) = f_\sigma(x)= A e^{-\frac{x^2}{\sigma^2}}$ so that the inputs are localized bell-shaped bumps on the shape $\Omega$.

\subsubsection{Planar retinotopy}

\begin{figure}[htbp]
 \centering
 \includegraphics[width=0.5\textwidth]{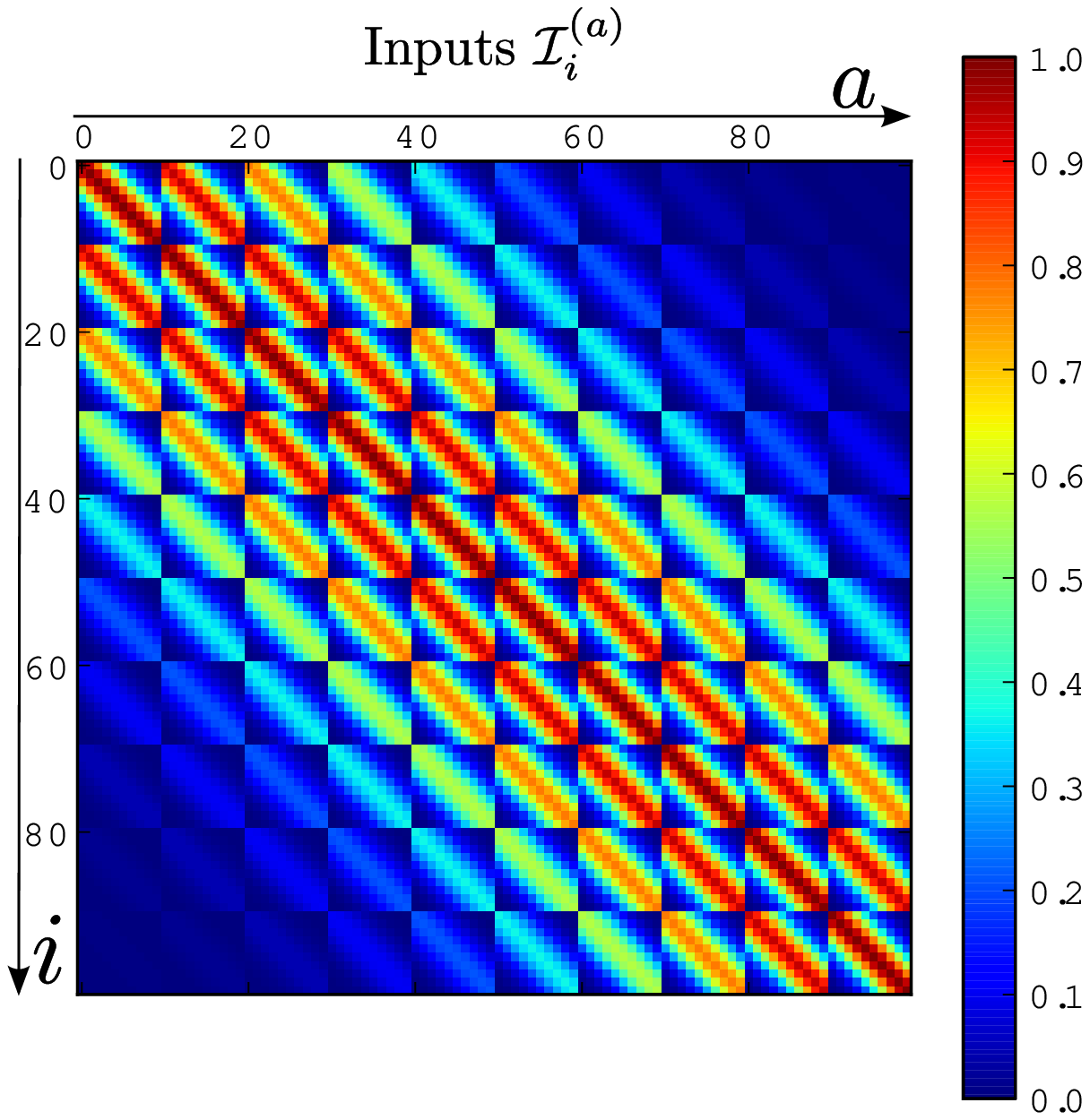}\hfill
 \includegraphics[width=0.5\textwidth]{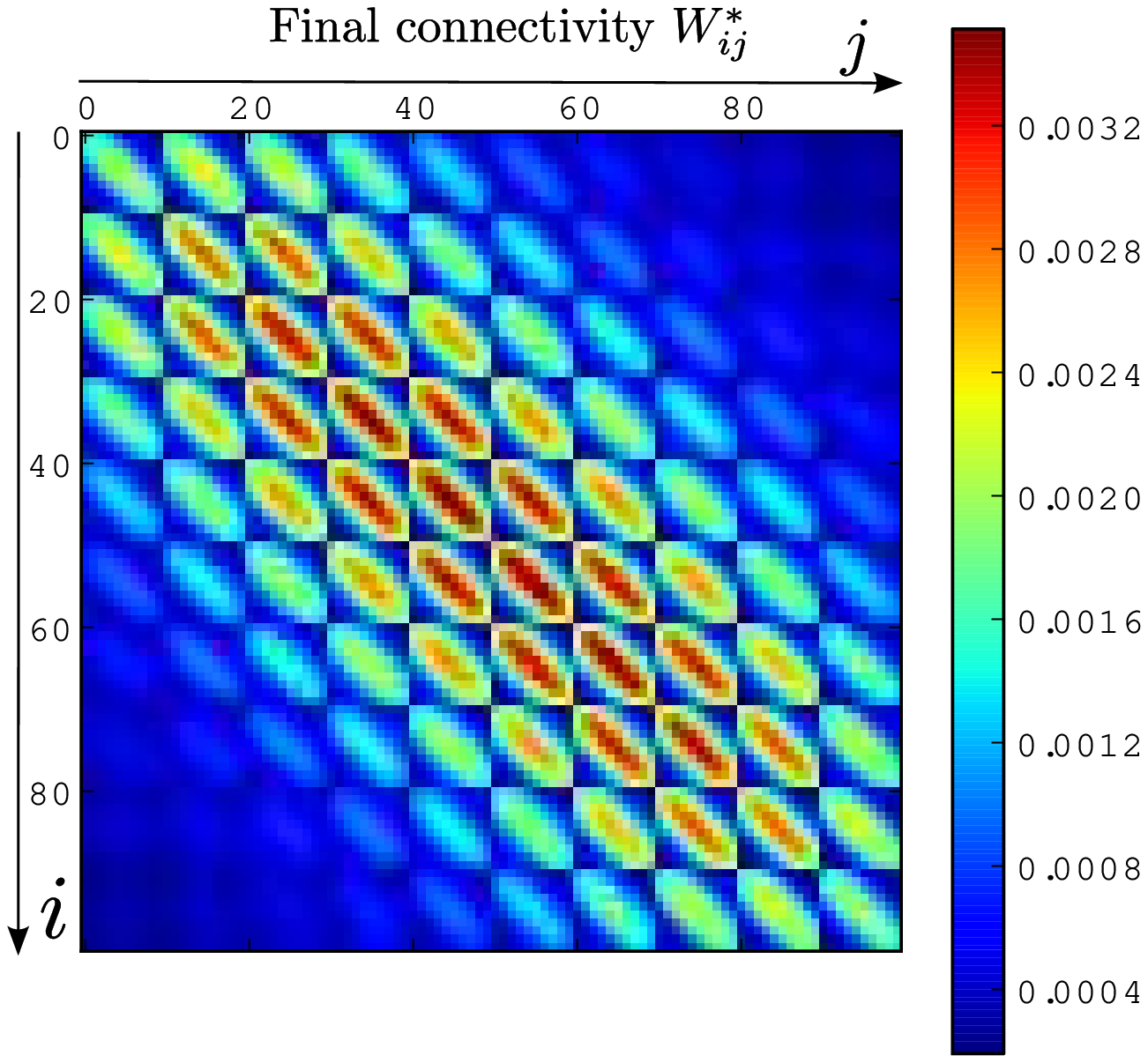}
 \caption{Plot of planar retinotopic inputs on $\Omega =  [0,1] \times [0,1]$ (left) and final connectivity matrix of the system $\Sigma'$ (right). The parameters used for this simulation are $s(x) = \frac{1}{1+e^{-4(x - 1)}}$, $l = 1$, $\mu = 10$, $\eps = 0.001$, $N=M = 100$, $\sigma = 4$.}
 \label{fig: 2d simus}
 \end{figure}

\noindent We consider a set of Gaussian inputs uniformly distributed over a two-dimensional plane, e.g. $\Omega = [0,1] \times [0,1]$. For simplicity, we take $N=M=K^2$ and set $z_a = y_i$ for $i=a$, $a \in \{1,\ldots,M\}$. (The numerical results show an identical structure for the final connectivity when the $y_j$ correspond to random points, but the analysis is harder). In the simpler case of one-dimensional Gaussians with $N=M=K$, the input matrix takes the form $\I=T_{f_{\sigma}}$, where $T_f$ is a symmetric Toeplitz matrix:
\begin{equation}
 T_{f} = 
\begin{pmatrix}
  f(0) & f(1) & f(2) & \cdots & \cdots & f(K) \\
  f(1) & f(0) & f(1) & f(2) & \cdots & f(K-1)\\
  f(2) & f(1) & f(0) & f(1) & \cdots & f(K-2)\\
  \vdots & \vdots & \ddots & \ddots & \ddots & \vdots \\
  f(K) & f(K-1) & f(K-2) & \cdots & \cdots & f(0) \\
 \end{pmatrix}
\end{equation}
In the two-dimensional case, we set $y=(u,v)\in \Omega$ and introduce the labeling $y_{k+(l-1)K}=(u_k,v_l)$ for $ k,l =1,\ldots K$. It follows that $I_i^{(a)}\sim \exp(-(u_k-u_{k'})^2)\exp(-(v_l-v_{l'})^2)$ for $i=k+(l-1)K$ and $a=k'+(l'-1)K$. Hence, we can write $\I = T_{f_\sigma} \otimes T_{f_\sigma}$, where $\otimes$ is the Kronecker product; the Kronecker product is responsible for the $K\times K$ sub-structure we can observe in figure~\ref{fig: 2d simus} with $K=10$. Note that if we were interested in a n-dimensional retinotopy, then the input matrix could be written as a Kronecker product between n Toeplitz matrices. As previously mentioned, the final connectivity matrix roughly corresponds to the correlation matrix of the input matrix. It turns out that the correlation matrix of $\I$ is also a Kronecker product of two Toeplitz matrix generated by a single Gaussian (with a different standard deviation). Thus, the connectivity matrix has the same basic form as the input matrix when $z_a = y_i$ for $i=a$. The inputs and stable equilibrium points of the simulated system are shown in figure~\ref{fig: 2d simus}. The positions $x_i$ of the neurons after multidimensional scaling are shown in figure~\ref{fig: retinotopy positions}.

 \begin{figure}[htbp]
 \centering
 \includegraphics[width=0.45\textwidth]{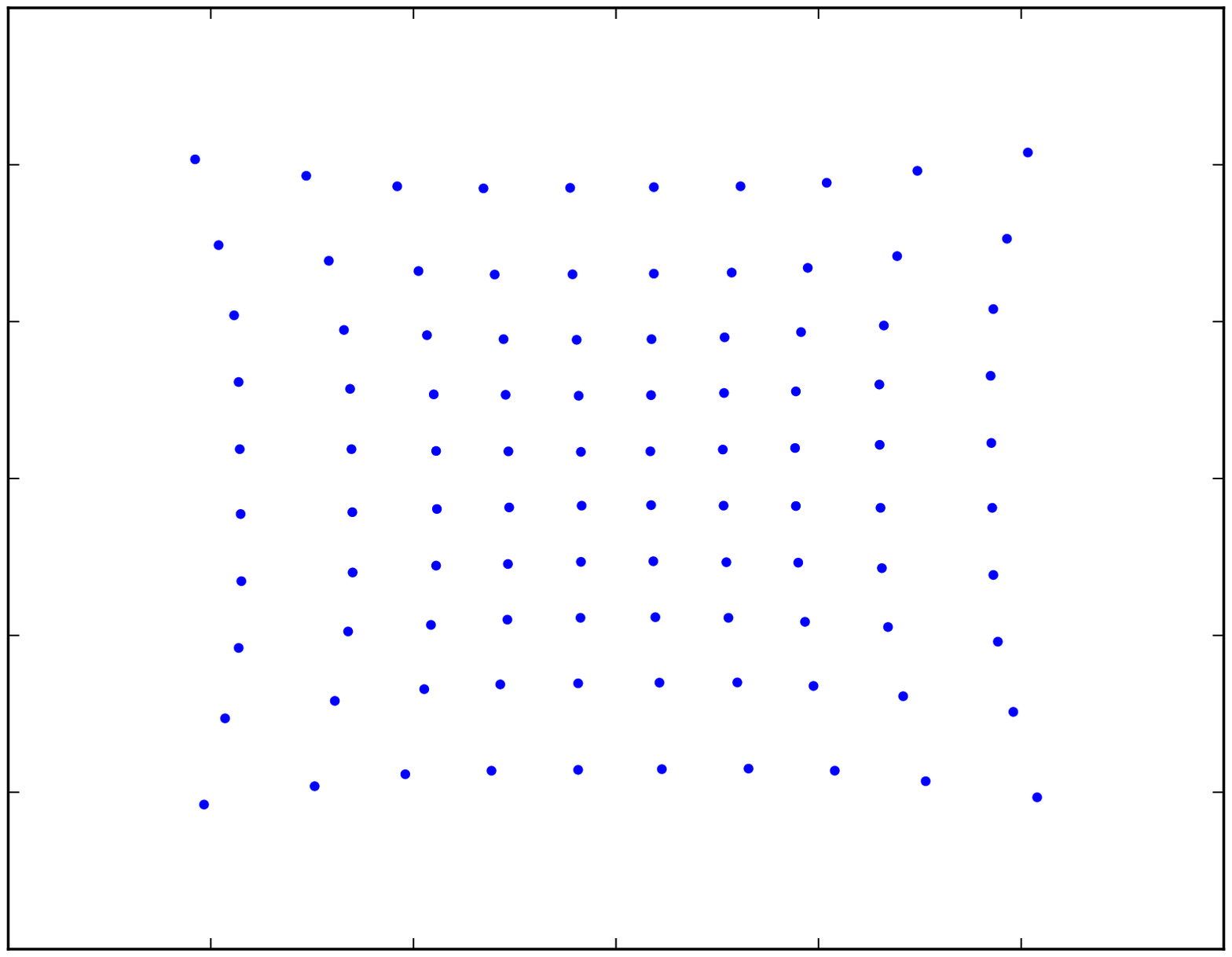}\hfill
 \includegraphics[width=0.45\textwidth]{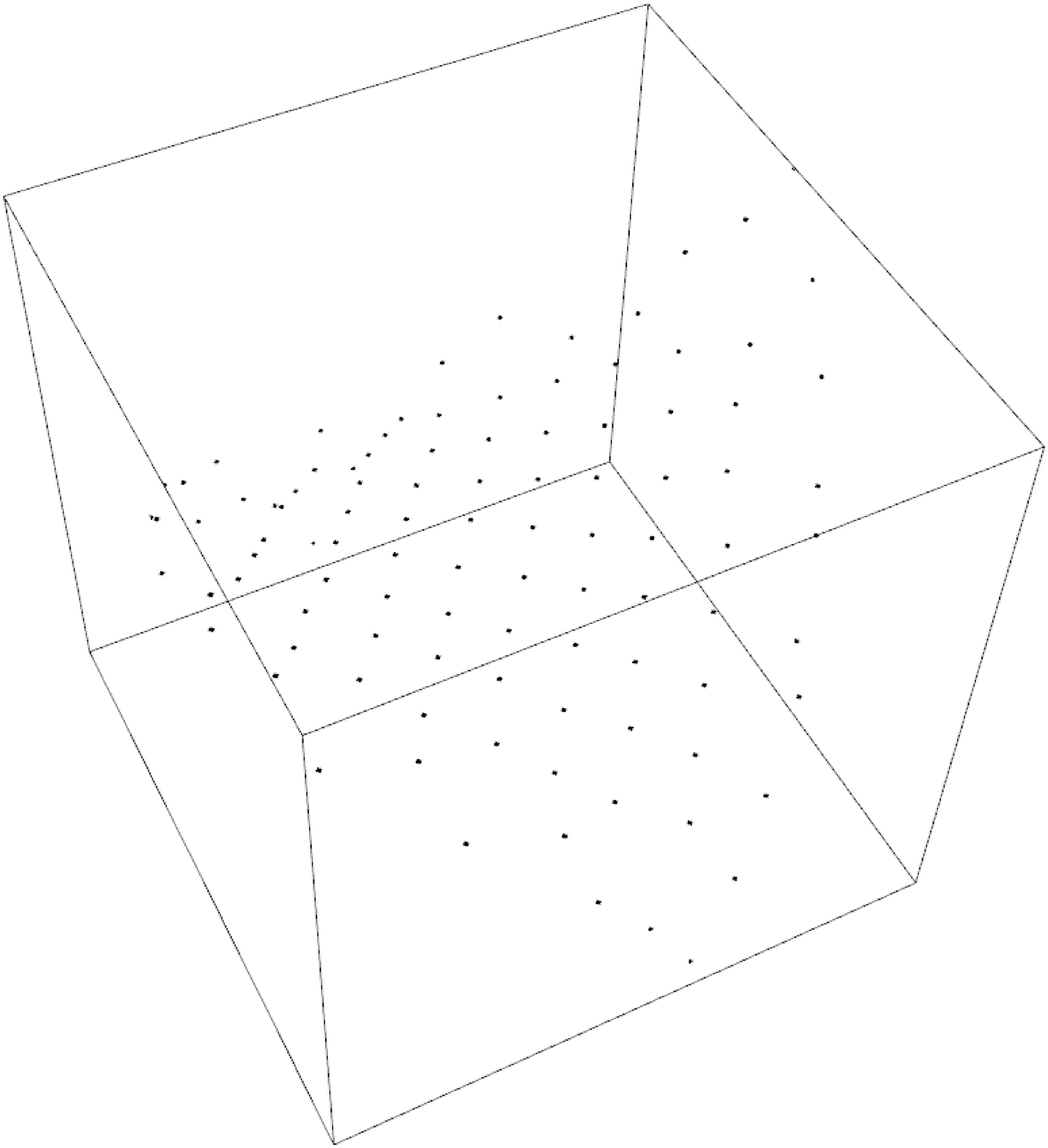}
 \caption{Positions $x_i$ of the neurons after having applied classical multidimensional scaling to the final connectivity matrix shown in figure \ref{fig: 2d simus} for $k=2$ (left) and $k=3$ (right). The regular spacing of the neurons for $k=2$ shows that the planar structure of the inputs has been recovered, although the corner of the square appear less regular due to boundary effects. In the case $k = 3$, there is an embedding of the plane into three dimensions; the saddle--like shape accounts for the corner irregularity observed when $k=2$. }
 \label{fig: retinotopy positions}
\end{figure}

\subsubsection{Toro\"idal retinotopy}
\label{part: torus}
\begin{figure}[htbp]
 \centering
 \includegraphics[width=0.55\textwidth]{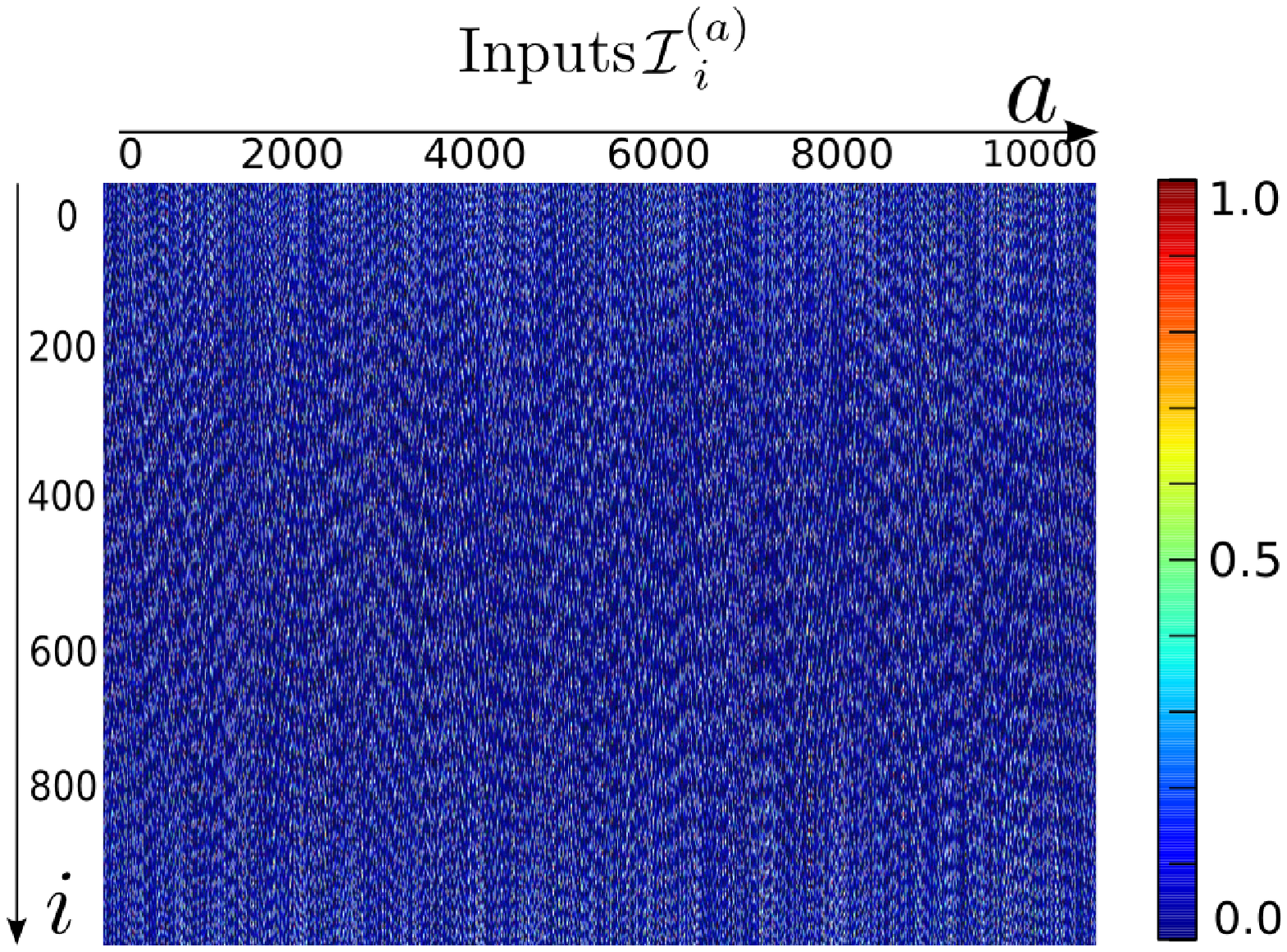}\hfill
 \includegraphics[width=0.45\textwidth]{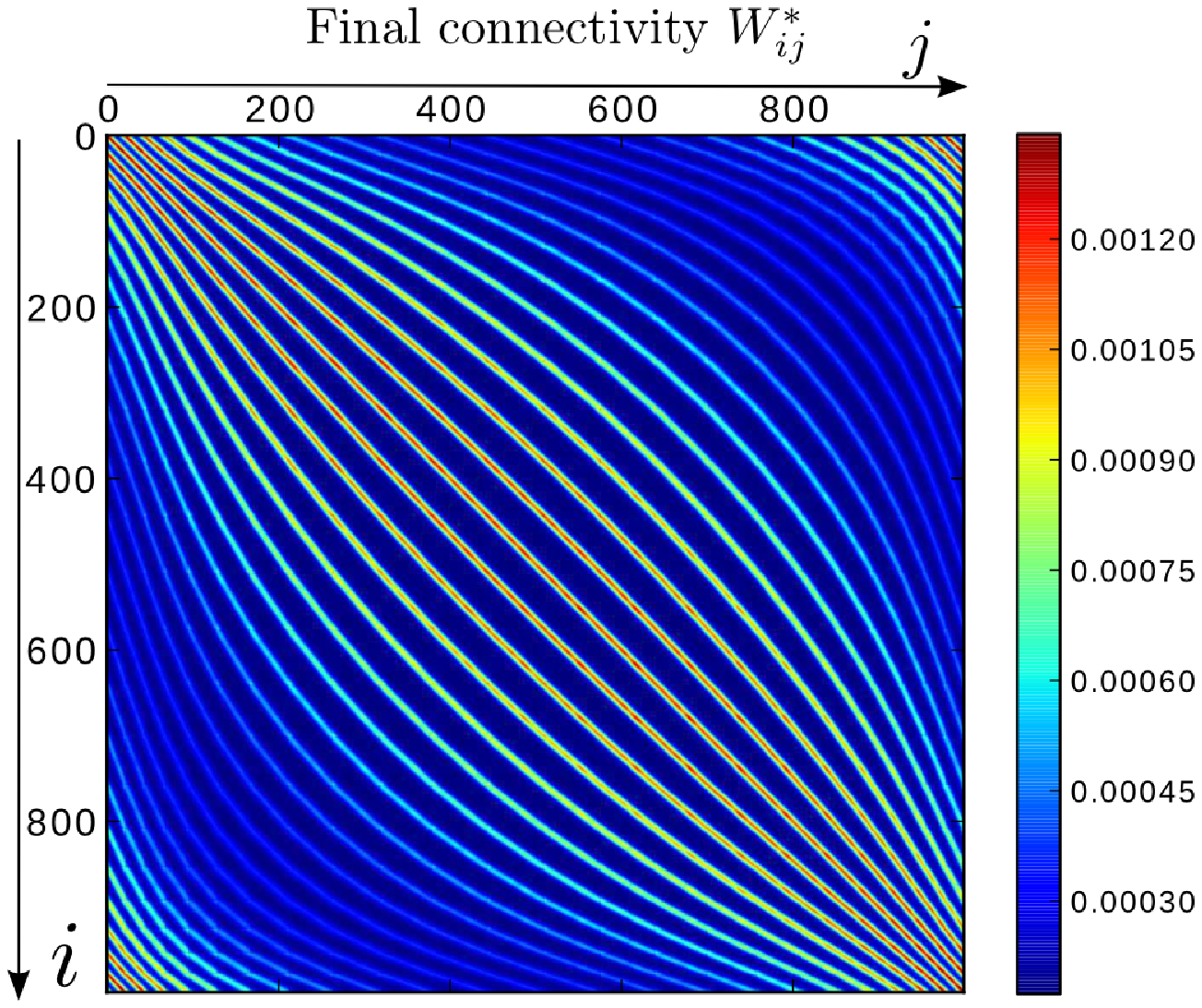}
 \caption{Plot of retinotopic inputs on $\Omega = \mathbb{T}^2$ (left) and the final connectivity matrix (right) for the system $\Sigma'$. The parameters used for this simulation are $s(x) = \frac{1}{1+e^{-4(x - 1)}}$, $l = 1$, $\mu = 10$, $\eps = 0.001$, $N=1000, M=10,000$, $\sigma = 2$.}
 \label{fig: torus simus}
\end{figure}

\begin{figure}[htbp]
 \centering
 \includegraphics[width=0.5\textwidth]{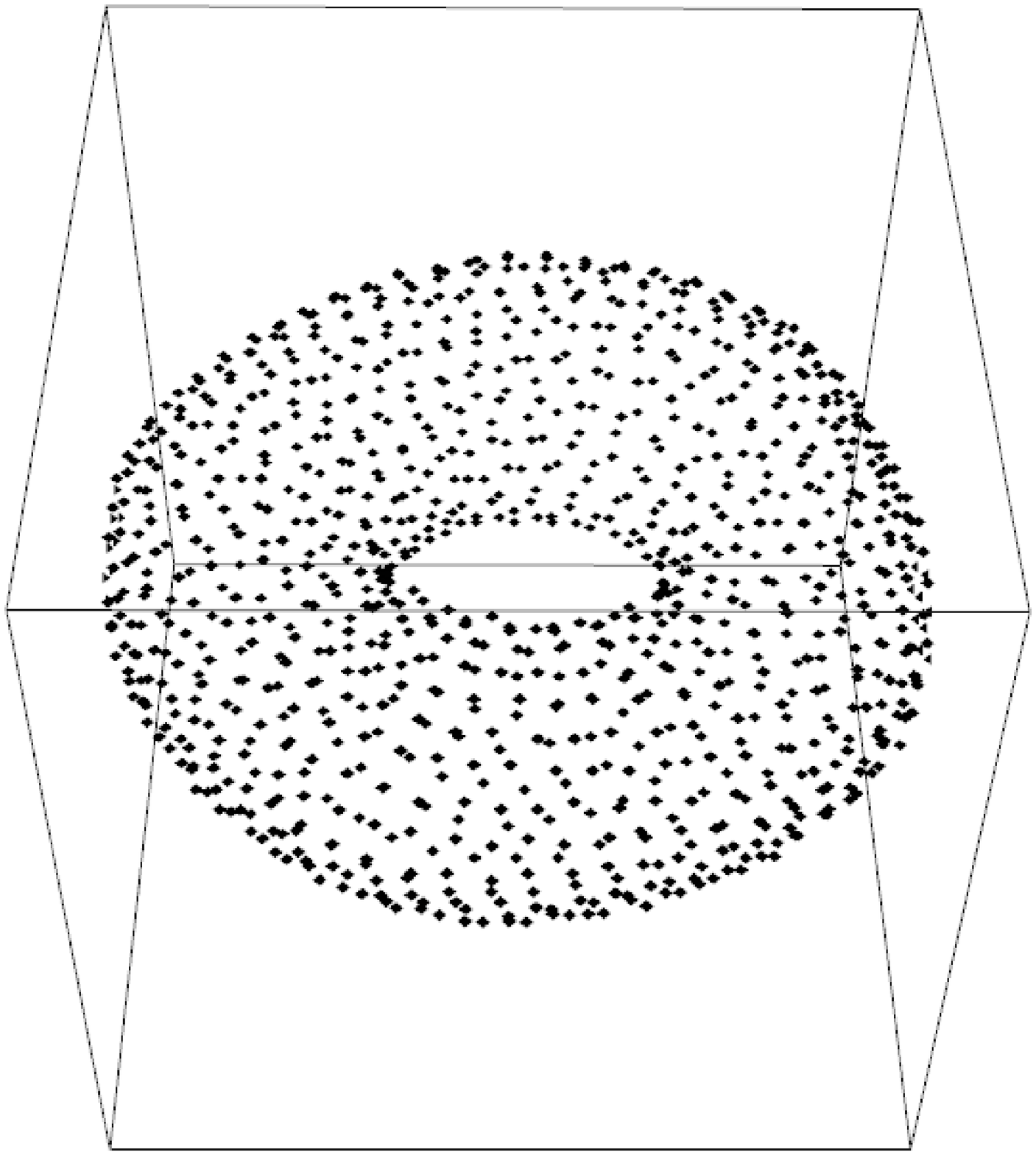}\hfill
\includegraphics[width=0.5\textwidth]{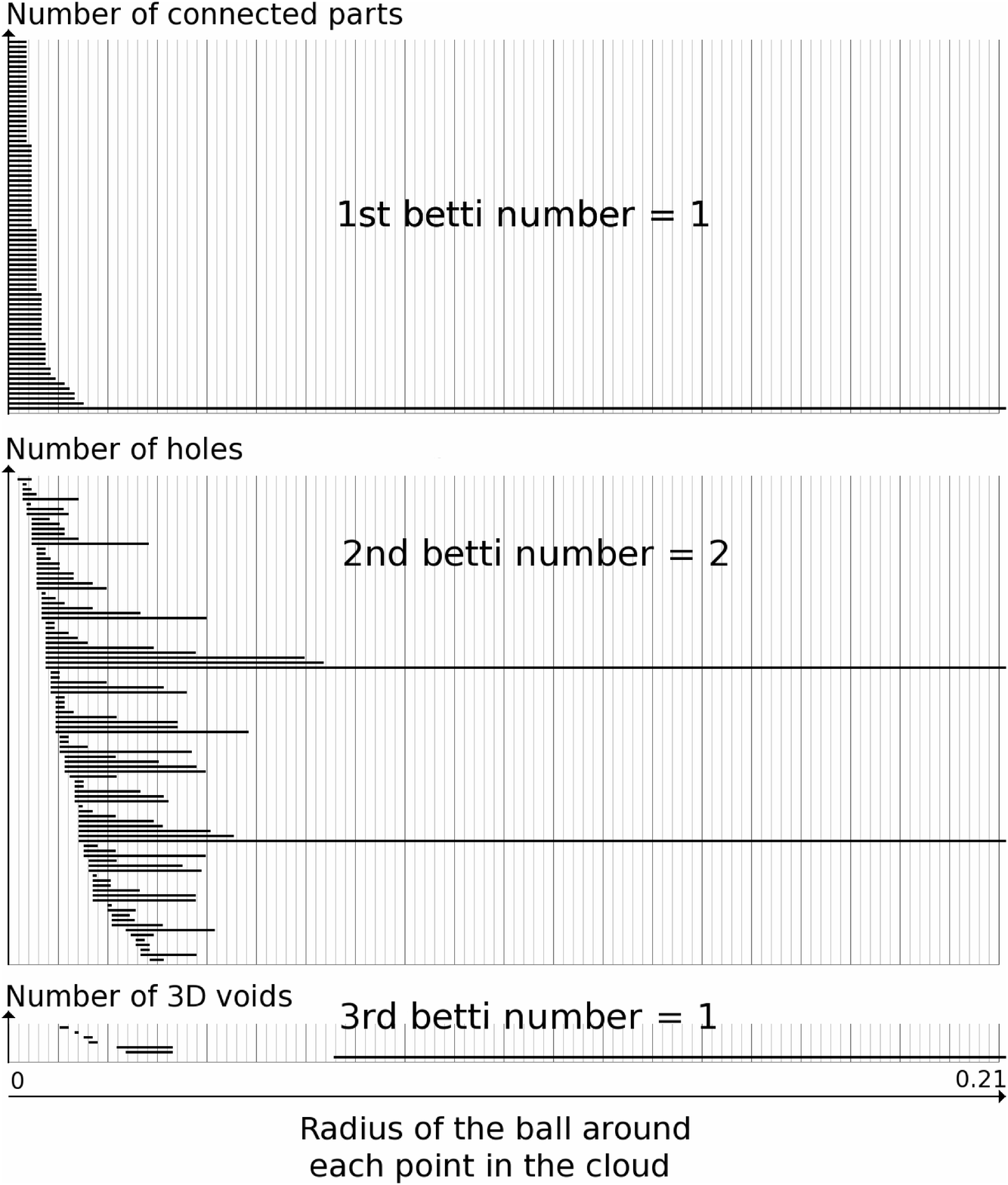}
 \caption{Left: Positions $x_i$ of the neurons for $k = 3$ after having applied multidimensional scaling methods presented in part \ref{part: multidimensional scaling} to the final connectivity matrix shown in figure \ref{fig: torus simus}. Right: Persistent cohomology barcodes of the cloud of points $x_i$ computed using the Jplex software package of \cite{plex}. (See section~\ref{part: persistent cohomology} for a short introduction to persistent cohomology). The triplet of betti numbers (1,2,1) appear stable confirming that the points lie on a 2 dimensional torus.}
 \label{fig: torus positions}
\end{figure}

We now assume that the inputs are uniformly distributed over a two-dimensional torus, i.e. $\Omega = \mathbb{T}^2$. That is, the input  labels $z_a$ are randomly distributed on the torus. The neuron labels $y_i$ are regularly and uniformly distributed on the torus. The inputs and final stable weight matrix of the simulated system are shown in figure~\ref{fig: torus simus}. The positions $x_i$ of the neurons after multidimensional scaling for $k=3$ are shown in figure~\ref{fig: torus positions}, and appear to form a cloud of points distributed on a torus. In order to confirm this, we have used a numerical method from computational cohomology \cite{zomorodian2005computing} to construct the so--called persistent cohomology barcodes of the neurons' positions. These determine certain topological invariants of the underlying space. (See section~\ref{part: persistent cohomology} for a short introduction to persistent cohomology and barcodes). The results are also shown in figure~\ref{fig: torus positions}, and establish that the network has learnt the underlying toroidal geometry of the inputs.

\subsection{Links with neuroanatomy}
The brain is subject to energy constraints which are completely neglected in the above formulation. These constraints most likely have a significant impact on the positions of real neurons in the brain. Indeed, it seems reasonable to assume that the positions and connections of neurons reflect a trade-off between the energy costs of biological tissue and their need to process information effectively. For instance, it has been suggested that a principle of wire length minimization may occur in the brain \cite{swindale1996development, chklovskii2002wiring}. In our neural mass framework, one may consider that the stronger two neural masses are connected, the larger the number of real axons linking the neurons together. Therefore, minimizing axonal length can be read as: the stronger the connection the closer, which is consistent with the convolutional part of the weight matrix. However, the underlying geometry of natural inputs is likely to be very high-dimensional, whereas the brain lies in a three-dimensional world. In fact, the cortex is so flat that it is effectively two-dimensional. Hence, the positions of real neurons are different from the positions $x_i \in \R^k$ in a high dimensional vector space; since the cortex is roughly two-dimensional, the positions could only be realized physically if $k = 2$. Therefore, the three-dimensional toric geometry or any higher dimensional structure could not be perfectly implemented in the cortex without the help of non-convolutional long-range connections. Indeed, we suggest that the cortical connectivity is made of two parts: i) a local convolutional connectivity corresponding to the convolutional term $G_\sigma$ in (\ref{eq: connectivity decomposition}), which is consistent with the requirements of energy efficiency, and ii) a non-convolutional connectivity corresponding to the factor $M$ in equation (\ref{eq: connectivity decomposition}), which is required in order to represent various stimulus features. If the cortex were higher-dimensional ($k \gg 2$) then $M\equiv 1$.

We illustrate the above claim by considering two examples based on the functional anatomy of the primary visual cortex: the emergence of ocular dominance columns and orientation columns, respectively. We  proceed by returning to the case of planar retinotopy (section 5.3.1) but now with additional input structure. In the first case, the inputs are taken to be binocular and isotropic, whereas in the second case they are taken to be monocular and anisotropic. The details are presented below. Given a set of prescribed inputs, the network evolves according to equation (\ref{eq: system Sigma prime}) and the lateral connections converge to a stable equilibrium. The resulting weight matrix is then projected onto the set of distance matrices for $k=2$ (as described in section \ref{part: multidimensional scaling}) using the stress majorization or SMACOF algorithm for the stress1 cost function as described in \cite{borg2005modern}. We thus assign a position $x_i\in \R^2$ to the $i$th neuron, $i=1,\ldots, N$. (Note that the position $x_i$ extracted from the weights using multidimensional scaling is distinct from the ``physical'' position $y_i$ of the neuron in the retinocortical plane; the latter determines the center of its receptive field). The convolutional connectivity ($G_\sigma$ in equation \ref{eq: connectivity decomposition}) is therefore completely defined: on the planar map of points $x_i$, neurons are isotropically connected to their neighbors; the closer the neurons are the stronger is their convolutional connection. Moreover, since the stimulus feature preferences (orientation, ocular dominance) of each neuron $i$, $i=1,\ldots, N$, are prescribed, we can superimpose these feature preferences on to the planar map of points $x_i$. In both examples, we find that neurons with the same ocular or orientation selectivity tend to cluster together (see figures \ref{fig: ocular dominance} and \ref{fig: orientation}): interpolating these clusters then generates corresponding feature columns. It is important to emphasize that the retinocortical positions $y_i$ do not have any columnar structure, that is, they do not form clusters with similar feature preferences. Thus, in contrast to standard developmental models of vertical connections, the columnar structure emerges from the recurrent weights following Hebbian learning and an application of multidimensional scaling. It follows that neurons coding for the same feature tend to be strongly connected; indeed, the multidimensional scaling algorithm has the property that it positions  strongly connected neurons close together . Equation (\ref{eq: connectivity decomposition}) also suggests that the connectivity has a non-convolutional part,  $M$, which is a consequence of the low-dimensionality ($k=2$). In order to illustrate the structure of the non-convolutional connectivity, we select a neuron $i$ in the plane and draw a link from it at position $x_i$ to the neurons at position $x_j$ for which $M(x_i,x_j)$ is maximal. We find that $M$ tends to be patchy, i.e. it connects neurons having the same feature preferences. In the case of orientation, $M$ also tends to be co-aligned, i.e. connecting neurons with similar orientation preference along a vector in the plane of the same orientation.

\subsubsection{Ocular dominance columns and patchy connectivity}
In order to construct binocular inputs, we partition the $N$ neurons into two sets $i\in \{1,\ldots,N/2\}$ and $i\in \{N/2+1,\ldots,N\}$ that code for the left and right eyes, respectively. The $i$th neuron is then given a retinocortical position $y_i\in [0,1]\times [0,1]$, with the $y_i$ uniformly distributed across the plane. We do not assume {\em a priori} that there exist any ocular dominance columns, that is, neurons with similar retinocortical positions $y_i$ do not form clusters of cells coding for the same eye. We then take the $a$th input to the network to be of the form
\begin{align*}
I_i^{(a)} &= (1+\gamma(a)) e^{-\frac{(y_i-z_a)^2}{\sigma'^2}},\quad i=1,\ldots,N/2 \\
I_i^{(a)} &= (1-\gamma(a)) e^{-\frac{(y_i-z_a+s)^2}{\sigma'^2}},\quad i=N/2+1,\ldots,N,
\end{align*}
where $s \in \R^2$ represents some form of binocular disparity, $z_a$ and $\gamma(a)$ are randomly generated from $[0,1]^2$ and $[-1,1]$, respectively, see \cite{bressloff:05}. Thus, if $\gamma(a) > 0$ ($\gamma(a) < 0)$ then the corresponding input is predominantly from the left (right) eye. In our simulations we take $\sigma = \sigma' = 0.1$ and $s = 0.005$. The results of our simulations are shown in figure ~\ref{fig: ocular dominance}. In particular, we plot the points $x_i$ obtained by performing multidimensional scaling on the final connectivity matrix for $k=2$, and superimposing upon this the ocular dominance map obtained by interpolating between clusters of neurons with the same eye preference. We also illustrate the non-convolutional connectivity by linking one selected neuron to the five neurons labeled $j$ it is most strongly connected to (with $M(x_{i}, x_j)>1$), with $i$ the label of the central neuron. This clearly shows that long--range connections tend to link cells with the same ocular dominance.

\begin{figure}[htbp]
 \centering
 \includegraphics[width=0.9\textwidth]{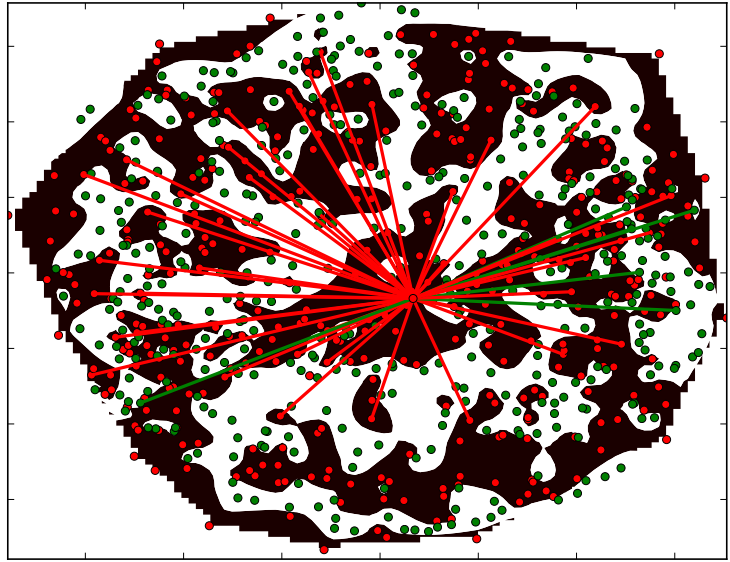}
 \caption{Plot of the positions $x_i$ of neurons for $k=2$ in red (right eye) and green (left eye). We have used an interpolation method to highlight the areas dominated by the right eye in black. These ocular dominance columns have fractal borders which are less regular than those observed in optical imaging experiments. The convolutional connectivity ($G_\sigma$ in equation (\ref{eq: connectivity decomposition})) is implicitly described by the position of the neurons: the closer the neurons, the stronger their connections. The strongest components of the non-convolutional connectivity ($M$ in equation (\ref{eq: connectivity decomposition})) from a central red neuron are also shown by drawing links from this neuron to the target neurons. The color of the link refers to the color of the target neuron. Therefore, we see that it is mainly connected to neurons of its same ocular dominance resulting in a patchy distribution. The parameters used for this simulation are $s(x) = \frac{1}{1+e^{-4(x - 1)}}$, $l = 1$, $\mu = 10$, $\eps = 0.01$, $N=800$ $M = 3200$. }
 \label{fig: ocular dominance}
\end{figure}

\subsubsection{Orientation columns and colinear connectivity}

In order to construct oriented inputs, we partition the $N$ neurons into four groups $\Sigma_{\theta}$ corresponding to different orientation preferences $\theta = \{0, \frac{\pi}{4}, \frac{\pi}{2},\frac{3\pi}{4}\}$. Thus, if neuron $i \in \Sigma_{\theta}$ then its orientation preference is $\theta_i=\theta$. For each group, the neurons are randomly assigned a retinocortical position $y_i \in [0,1]\times [0,1]$. Again, we do not assume {\em a priori} that there exist any orientation columns, that is, neurons with similar retinocortical positions $y_i$ do not form clusters of cells coding for the same orientation preference. Each cortical input ${I}_i^{(a)}$ is generated by convolving a thalamic input consisting of an oriented Gaussian with a Gabor--like receptive field \cite{miikkulainen-bednar-etal:05}. Let $\mathcal{R}_\theta$ denote a 2-dimensional rigid body rotation in the plane with $\theta \in [0,2\pi)$. Then
\begin{equation}
 I_i^{(a)}=\int G_i(\xi-y_i)I_a(\xi-z_a)d\xi,
\end{equation}
where
\begin{equation}
G_i(\xi)=G_0(\mathcal{R}_{\theta_i} \xi )
\end{equation}
and $G_0(\xi)$ is the Gabor--like function
 $$G_0(\xi) = A_+ e^{-\xi^T.\Lambda^{-1}.\xi} - A_- e^{-(\xi - e_0)^T.\Lambda^{-1}.(\xi - e_0)} - A_- e^{-(\xi + e_0)^T.\Lambda^{-1}.(\xi + e_0)}$$
 with $e_0 = (0,1)$ and
$$\Lambda= \begin{pmatrix} \sigma_{large} & 0 \\ 0 & \sigma_{small} \end{pmatrix}.$$
The amplitudes $A_+, A_-$ are chosen so that $\int G_0(\xi)d\xi = 0$.
Similarly, the thalamic input $I_a(\xi)=I(\mathcal{R}_{\theta'_a}\xi)$ with $I(\xi)$ the anisotropic Gaussian
\[I(\xi) = e^{-\xi^T.\Lambda'^{-1}.\xi}, \qquad \Lambda' = \begin{pmatrix} \sigma'_{large} & 0 \\ 0 & \sigma'_{small} \end{pmatrix}.\]
The input parameters $\theta'_a$ and $z_a$ are randomly generated from $[0,\pi)$ and $[0,1]^2$ respectively. In our simulations we take $\sigma_{large} = 0.133...$, $\sigma'_{large} = 0.266...$ and $\sigma_{small} = \sigma'_{small} = 0.0333...$. The results of our simulations are shown in figure ~\ref{fig: orientation}. In particular, we plot the points $x_i$ obtained by performing multidimensional scaling on the final connectivity matrix for $k=2$, and superimposing upon this the orientation preference map obtained by interpolating between clusters of neurons with the same orientation preference. To avoid border problems we have zoomed on the center on the map. We also illustrate the non-convolutional connectivity by linking one selected neuron to all other neurons for which $M$ is maximal. The patchy, anisotropic nature of the long--range connections is clearly seen. The anisotropic nature of the connections is further quantified in the histogram of figure~\ref{fig: orientation histogram}. 

\begin{figure}[htbp]
 \centering
 \includegraphics[width=0.8\textwidth]{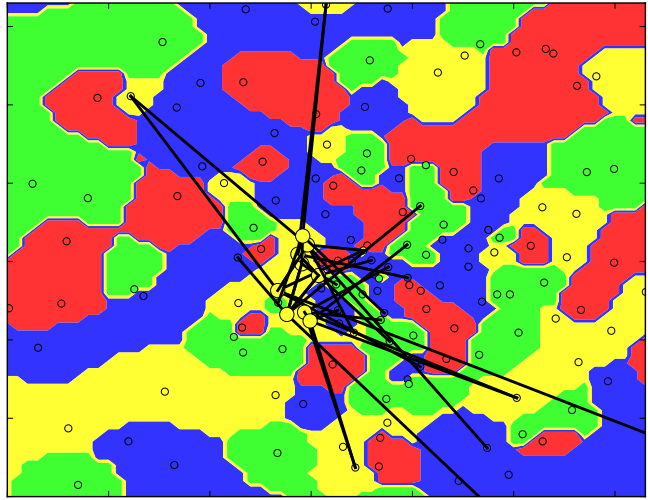}\hfill
 \caption{Plot of the positions $x_i$ of neurons for $k=2$ obtained by multidimensional scaling of the weight matrix. Neurons are clustered in orientation columns represented by the colored areas, which are computed by interpolation. The strongest components of the non-convolutional connectivity ($M$ in equation \eqref{eq: connectivity decomposition}) from a particular neuron in a yellow area are illustrated by drawing black links from this neuron to the target neurons. Since the yellow color corresponds to an orientation of $\frac{3\pi}{4}$, the non-convolutional connectivity shows the existence of a co-linear connectivity as exposed in \cite{bosking1997orientation}. The parameters used for this simulation are $s(x) = \frac{1}{1+e^{-4(x - 1)}}$, $l = 1$, $\mu = 10$, $\eps = 0.01$, $N=900$ $M = 9000$.}
 \label{fig: orientation}
\end{figure}

\begin{figure}[htbp]
 \centering
 \includegraphics[width=0.8\textwidth]{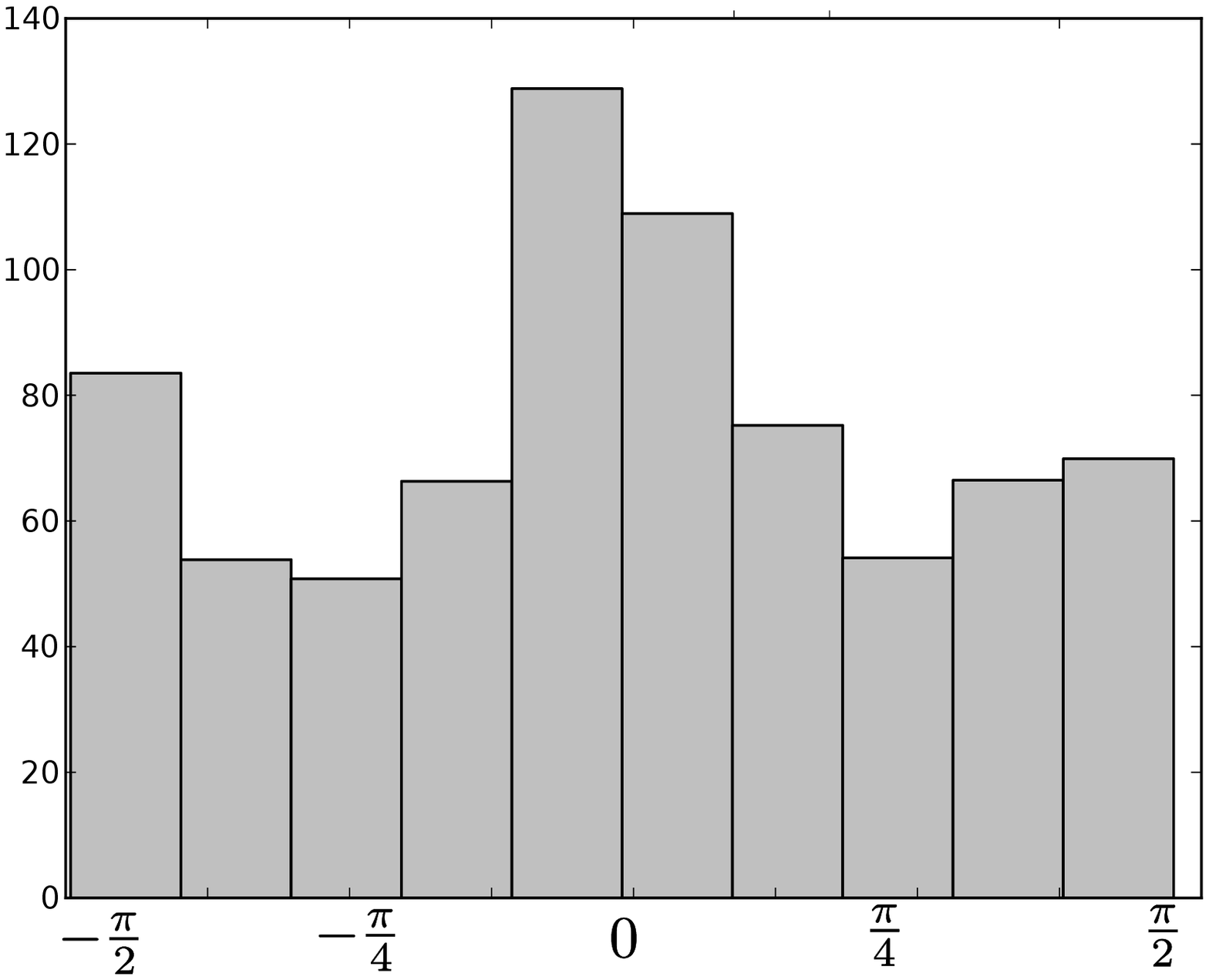}
 \caption{Histogram of the 5 largest components of the non-convolutional connectivity for 80 neurons randomly chosen among those shown in Fig. \ref{fig: orientation}. The abcissa corresponds to the difference in radian between the direction preference of the neuron and the direction of the links between the neuron and the target neurons. This histogram is weighted by the strength of the non-convolutional connectivity. It shows a preference for co-aligned neurons but also a slight preference for perpendicularly-aligned neurons (e.g. neurons of the same orientation but parallel to each other).}
 \label{fig: orientation histogram}
\end{figure}
\bigskip

\section{Discussion}
\label{part: discussion}
In this paper, we have shown how a neural network can learn the underlying geometry of a set of inputs. We have considered a fully recurrent neural network whose dynamics is described by a simple non-linear rate equation, together with unsupervised Hebbian learning with decay that occurs on a much slower time scale. Although several inputs are periodically presented to the network, so that the resulting dynamical system is non-autonomous, we have shown that such a system has a fairly simple dynamics: the network connectivity matrix always converges to an equilibrium point. We have then demonstrated how this connectivity matrix can be expressed as a distance matrix in $\R^k$ for sufficiently large $k$, which can be related to the underlying geometrical structure of the inputs. If the connectivity matrix is embedded in a lower two-dimensional space ($k=2$), then the emerging patterns are similar to experimentally observed cortical feature maps. That is, neurons with the same feature preferences tend to cluster together forming cortical columns within the embedding space. Moreover, the recurrent weights decompose into a local isotropic convolutional part, which is consistent with the requirements of energy efficiency, and a longer--range non-convolutional part that is patchy. This suggest a new interpretation of the cortical maps: they correspond to two-dimensional embeddings of the underlying geometry of the inputs.

One of the limitations of applying simple Hebbian learning to recurrent cortical connections is that it only takes into account excitatory connections, whereas 20$\%$ of cortical neurons are inhibitory. Indeed, in most developmental models of feedforward connections, it is assumed that the local and convolutional connections in cortex have a Mexican hat shape with negative (inhibitory) lobes for neurons that are sufficiently far from each other. From a computational perspective, it is possible to obtain such a weight distribution by replacing Hebbian learning with some form of covariance learning  (\cite{sejnowski1989hebb}). However, it is difficult to prove convergence to a fixed point in the case of the covariance learning rule, and multidimensional scaling method cannot be applied directly unless the Mexican hat function is truncated so that it is invertible. Another limitation of rate-based Hebbian learning is that it does not take into account causality, in contrast to more biologically detailed mechanisms such as spike timing dependent plasticity.

The approach taken here is very different from standard treatments of cortical development \cite{miller89,swindale1996development}, in which the recurrent connections are assumed to be fixed and of convolutional Mexican hat form whilst the feedforward vertical connections undergo some form of correlation-based Hebbian learning. In the latter case, cortical feature maps form in the physical space of retinocortoical coordinates $y_i$, rather than in the more abstract planar space of points $x_i$ obtained by applying multidimensional scaling to recurrent weights undergoing Hebbian learning in the presence of fixed vertical connections. A particular feature of cortical maps formed by modifiable feedforward connections is that the mean size of a column is determined by a Turing-like pattern forming instability, and depends on the length scales of the Mexican hat weight function and the two-point input correlations \cite{miller89,swindale1996development}. No such Turing mechanism exists in our approach so that the resulting cortical maps tend to be more fractal-like (many length scales) compared to real cortical maps. Nevertheless, we have established that the geometrical structure of cortical feature maps can also be encoded by modifiable recurrent connections. This should have interesting consequences for models that consider the joint development of feedforward and recurrent cortical connections. One possibility is that the embedding space of points $x_i$ arising from multidimensional scaling of the weights becomes identified with the physical space of retinocortical positions $y_i$. The emergence of local convolutional structures together with sparser long-range connections would then be consistent with energy efficiency constraints in physical space.

Our paper also draws a direct link between the recurrent connectivity of the network and the positions of neurons in some vector space such as $\R^2$. In other words, learning corresponds to moving neurons or nodes so that their final position will match the inputs' geometrical structure. Similarly, the Kohonen algorithm \cite{kohonen1990self} describes a way to move nodes according to the inputs presented to the network. It also converges toward the underlying geometry of the set of inputs. Although not formally equivalent, it seems that both of these approaches have the same qualitative behaviour. However, our method is more general in the sense that no neighborhood structure is assumed {\em a priori}; such a  structure emerges via the embedding into $\R^k$.

Finally, note that we have used a discrete formalism based on a finite number of neurons. However, the resulting convolutional structure obtained by expressing the weight matrix as a distance matrix in $\R^k$, equation (\ref{eq: connectivity decomposition}), allows us to take an appropriate continuum limit. This then generates a continuous neural field model in the form of an integro-differential equation whose integral kernel is given by the underlying weight distribution. Neural fields have been used increasingly to study large--scale cortical dynamics (see \cite{coombes2005waves} for a review). Our geometrical learning theory provides a developmental mechanism for the formation of these neural fields. One of the useful features of neural fields from a mathematical perspective is that many of the methods of partial differential equations can be carried over. Indeed, for a general class of connectivity functions defined over continuous neural fields, a reaction-diffusion equation can be derived whose solution approximates the firing rate of the associated neural field \cite{degond1989weighted,cottet1995neural,edwards1996approximation}. It appears that the necessary connectivity functions are precisely those that can be written in the form (\ref{eq: connectivity decomposition}). This suggests that a network that has been trained on a set inputs with an appropriate geometrical structure behaves as a diffusion equation in a high-dimensional space together with a reaction term corresponding to the inputs.

\section{Acknowldegments}
MNG and ODF were partially funded by the ERC advanced grant NerVi nb 227747. MG was partially funded by the r\'egion PACA, France. This publication was based on work supported in part by the National Science Foundation (DMS-0813677) and by Award No KUK-C1-013-4 made by King Abdullah University of Science and Technology (KAUST). PCB was also partially supported by the Royal Society Wolfson Foundation.

\section{Appendix}
\subsection{Proof of the convergence to the symmetric attractor $\mathcal{A}$}
\label{part: appendix symmetric fixed points}
We need to prove the 2 points: (i) $\mathcal{A}$ is an invariant set, and (ii) for all ${\mathcal Y}(0) \in \R^{N \times M} \times \R^{N \times N}$, ${\mathcal Y}(t)$ converges to $\mathcal{A}$ as $t \rightarrow +\infty $. Since $\R^{N \times N}$ is the direct sum of the set of symmetric connectivities and the set of anti-symmetric connectivies, we write $W(t) =  W_S(t) + W_A(t),\ \forall t \in \R_+$, where $W_S$ is symetric and $W_A$ is anti-symetric.

(i) In (\ref{eq: system Sigma prime}),  the right hand side of the equation for $\dot{W}$ is symmetric. Therefore,  if $\exists t_1 \in R_+$ such that $W_A(t_1) = 0$, then W remains in $\mathcal{A}$ for $t\geq t_1$.

(ii) Projecting the expression for $\dot{W}$ in equation (\ref{eq: system Sigma prime}) on to the anti-symmetric component leads to
\begin{equation}
 \frac{dW_A}{dt} = -\eps \mu W_A(t)
\end{equation}
whose solution is $W_A(t) = W_A(0) \exp(-\eps \mu t), \forall t \in \R_+$. Therefore, $\displaystyle \lim_{t \to +\infty} W_A(t) = 0$. The system converges exponentially to $\mathcal{A}$.

\subsection{Proof of theorem \ref{thm: lasalle implies convergence}}
\label{part: appendix Lasalle}
Consider the following Lyapunov function (see equation (\ref{eq: energy learning}))
\begin{equation}
E({\mathcal U},W) = -\frac{1}{2}\langle {\mathcal U},W\cdot {\mathcal U} \rangle - \langle {\mathcal I},{\mathcal U} \rangle +{\langle 1, \overline{S^{-1}}\big({\mathcal U}\big) \rangle} + \frac{\tilde{\mu}}{2} \|W\|^2,
\end{equation}
where $\tilde{\mu}=\mu M$, such that if $W= W_S + W_A$, where $W_S$ is symmetric and $W_A$ is anti-symmetric.
\begin{equation}
 -\nabla E({\mathcal U},W) = \begin{pmatrix}
		W_S\cdot {\mathcal U} + I - S^{-1}\big({\mathcal U}\big) \\
		{\mathcal U} \cdot {\mathcal U}^T - \mu W
           \end{pmatrix}
\end{equation}
Therefore, writing the system $\Sigma'$, equation (\ref{eq: system Sigma prime}), as
$$\frac{d {\mathcal Y}}{dt}= \gamma \begin{pmatrix}
		W_S \cdot S({\mathcal V}) + I - S^{-1}\big(S({\mathcal V})\big) \\
		S({\mathcal V}) \cdot  S({\mathcal V})^T - \tilde{\mu} W
           \end{pmatrix} +\gamma \begin{pmatrix}
		W_A.S({\mathcal V})\\0\end{pmatrix},$$
where ${\mathcal Y}=( {\mathcal V},W)^T$, we see that
\begin{equation}
\label{why}
\frac{d{\mathcal Y}}{dt} = -\gamma\bigg(\nabla E \big(\sigma({\mathcal V}, W)\big)\bigg) + \Gamma(t)
\end{equation}
where $\gamma({\mathcal V},W)^T = ( {\mathcal V},\eps W/ M )^T$, $\sigma({\mathcal V},W) = ( S({\mathcal V}), W)$ and $\Gamma: \R_+ \rightarrow \H$ such that $\|\Gamma\| \underset{t\rightarrow +\infty}{\rightarrow} 0$ exponentially (because the system converges to $\mathcal{A}$).
It follows that the time derivative of $\tilde{E} = E \circ \sigma$ along trajectories is given by:
\begin{equation}
 \frac{d\tilde{E}}{dt} = \bigg\langle \nabla \tilde{E} , \frac{d{\mathcal Y}}{dt}\bigg\rangle = \bigg\langle \nabla_{\mathcal V} \tilde{E} , \frac{d{\mathcal V}}{dt}\bigg\rangle+\bigg\langle \nabla_W \tilde{E} , \frac{dW}{dt}\bigg\rangle .
\end{equation}
Substituting equation (\ref{why}) then yields
\begin{eqnarray}
\frac{d\tilde{E}}{dt} &=&  - \bigg\langle  \nabla \tilde{E}, \gamma\big(\nabla E \circ \sigma \big) \bigg\rangle + \underbrace{\bigg\langle \nabla \tilde{E}, \Gamma(t) \bigg\rangle}_{\tilde{\Gamma}(t)} \\
&=& - \bigg\langle S'({\mathcal V})\nabla_{\mathcal U} E \circ \sigma, \nabla_{\mathcal U}  E \circ \sigma \bigg\rangle - \frac{\eps}{M} \bigg\langle \nabla_W E \circ \sigma, \nabla_W E \circ \sigma \bigg\rangle + \tilde{\Gamma}(t) .\nonumber 
\end{eqnarray}
We have used the chain--rule of differentiation, whereby 
\[\nabla_V(\tilde{E})=\nabla_V(E\circ \sigma)= S'({\mathcal V})\nabla_{\mathcal U} E \circ \sigma , \]
and $S'({\mathcal V})\nabla_{\mathcal U} E $ (without dots) denotes the Hadamard (term by term) product, that is, 
\[ [S'({\mathcal V})\nabla_{\mathcal U} E]_{ia}=s'({ V}_i^{(a)})\frac{\partial E}{\partial { U}_i^{(a)}}\]

Note that $|\tilde{\Gamma}| \underset{t\rightarrow +\infty}{\rightarrow} 0$ exponentially because $\nabla \tilde{E}$ is bounded, and $S'({\mathcal V}) >0$ because the trajectories are bounded. Thus, there exists $t_1 \in \R_+$ such that $\forall t > t_1$, $\exists k \in \R_+^*$ such that
\begin{equation}
\frac{d\tilde{E}}{dt} \leq -k \|\nabla E \circ \sigma \|^2 \leq 0 .
\label{eq: dE/dt<0}
\end{equation}
As in \cite{cohen-grossberg:83} and \cite{dong1992dynamic}, we apply the Krasovskii-LaSalle invariance principle \cite{khalil-grizzle:96}. We check that:
\begin{itemize}
 \item $\tilde{E}$ is lower bounded. Indeed, ${\mathcal V}$ and $W$ are bounded. Given that ${\mathcal I}$ and $S$ are also bounded it is clear that $\tilde{E}$ is bounded.
\item $\displaystyle \frac{d\tilde{E}}{dt}$ is negative semidefinite on the trajectories as shown in equation (\ref{eq: dE/dt<0}).
\end{itemize}
Then the invariance principle tells us that the solutions of the system $\Sigma'$ approach the set $M = \Big\{ {\mathcal Y} \in \H : \displaystyle\frac{d\tilde{E}}{dt}({\mathcal Y})=0\Big\}$. Equation (\ref{eq: dE/dt<0}) implies that $M= \Big\{ Y \in \H : \nabla E \circ \sigma=0\Big\}$. Since $\displaystyle\frac{d{\mathcal Y}}{dt} = - \gamma\Big(\nabla E \circ \sigma\Big)$ and $\gamma \neq 0$ everywhere, $M$ consists of the equilibrium points of the system. This completes the proof.

\subsection{Proof of theorem \ref{thm: fixed points and linear stability}}
\label{part: appendix linear stability}
Denote the right--hand side of system $\Sigma'$, equation (\ref{eq: system Sigma prime}) by
\[F({\mathcal V},W) = \left\{
    \begin{array}{c}
	-{\mathcal V} +  W \cdot S\big({\mathcal V}\big)+I\\
        \displaystyle\frac{\eps}{M} \big(S({\mathcal V}).S({\mathcal V})^T  -\mu M W\big)
    \end{array}
\right.\]
The fixed points satisfy the condition $F({\mathcal V},W) = 0$ which immediately leads to equations (\ref{eq: M0 fixed points subspace}). Let us now check the linear stability of this system. The differential of $F$ at ${\mathcal V}^*,W^*$ is
\begin{equation}
 dF_{({\mathcal V}^*,W^*)} (\Z,J) = \begin{pmatrix}
               -\Z+W^*\cdot\big(S'({\mathcal V}^*)\Z\big) + J\cdot S({\mathcal V}^*)\\
		\displaystyle\frac{\eps}{M} \Big(\big(S'({\mathcal V}^*)\Z\big) \cdot S({\mathcal V}^*)^T + S({\mathcal V}^*) \cdot \big(S'({\mathcal V}^*)\Z\big)^T - \mu MJ \Big),
              \end{pmatrix} \nonumber
\label{eq: jacobian M0}
\end{equation}
where $S'({\mathcal V}^*)\Z$ denotes a Hadamard product, that is, $[S'({\mathcal V}^*)\Z]_{ia}= s'({V^*_i}^{(a)})Z_i^{(a)}$. Assume that there exist $\lambda\in \mathbb{C}^*,\ (\Z,J) \in \H$ such that $dF_{(V^*,W^*)} \begin{pmatrix}\Z\\J\end{pmatrix} = \lambda\begin{pmatrix}\Z\\J\end{pmatrix}$. Taking the second component of this equation and computing the dot product with $S({\mathcal V}^*)$ leads to
$$({\lambda + \eps \mu}) J\cdot S = \frac{\eps}{M}\left ((S'\Z) \cdot  S^T\cdot S + S\cdot  (S' \Z)^T\cdot S \right )$$
where $S =S({\mathcal V}^*)$, $S'=S'({\mathcal V}^*)$. Substituting this expression in the first equation leads to
\begin{multline}
M(\lambda + \eps \mu)(\lambda + 1) \Z = (\frac{\lambda}{\mu} + \eps) S \cdot  S^T\cdot (S'\Z) 
+ {\eps} (S'\Z)  \cdot  S^T\cdot  S
+ \eps S \cdot (S'\Z)^T \cdot S
\end{multline}

Observe that setting $\eps=0$ in the previous equation leads to an eigenvalue equation for the membrane potential only: 
\[ (\lambda + 1) \Z =\frac{1}{\mu M} S \cdot S^T\cdot (S'\Z).\]
Since $W^* = \frac{1}{\mu M} \big(S \cdot S^T \big)$, this equation implies that $\lambda + 1$ is an eigenvalue of the operator $X \mapsto W^*.(S'X)$. The magnitudes of the eigenvalues are always smaller than the norm of the operator. Therefore, we can say that if $1>\|W^*\| S'_m$ then all the possible eigenvalues $\lambda$ must have a negative real part. This sufficient condition for stability is the same as in \cite{faugeras-grimbert-etal:08}. It says that fixed points sufficiently close to the origin are always stable.

Let us now consider the case $\eps \neq 0$. Recall that $\Z$ is a matrix. We now ``flatten'' $\Z$ by storing its rows in a vector called $\Z_{row}$. We use the following result in \cite{brewer:78}: the matrix notation of operator $X \mapsto A\cdot X \cdot B$ is $A\otimes B^T$, where $\otimes$ is the Kronecker product. In this formalism the previous equation becomes
\begin{multline}
M(\lambda + \eps \mu)(\lambda + l) \Z_{row} = \bigg((\frac{\lambda}{\mu}+\eps)  S \cdot S^T \otimes I_d
+ \eps I_d \otimes S^T \cdot S
+ \eps S \otimes S^T\bigg) \cdot(S'Z)_{\text{row}}
\label{eq: necessary condition for eigenvalues of M0}
\end{multline}
where we assume that the Kronecker product has the priority over the dot product. We focus on the linear operator $\O$ defined  by the right hand side and bound its norm. Note that we use the following norm $\|W\|_\infty = \sup_X\frac{\|W.X\|}{\|X\|}$ which is equal to the largest magnitude of the eigenvalues of $W$.
\begin{multline}
 \|\O\|_\infty \leq S'_m \bigg( |\frac{\lambda}{\mu}|  \|S \cdot S^T \otimes I_d\|_\infty
+ \eps \|S \cdot S^T \otimes I_d\|_\infty + \eps \|I_d \otimes S^T \cdot S\|_\infty\\ + \eps \|S\otimes S^T\|_\infty \bigg).
\end{multline}
Define, $\nu_m$ to be the magnitude of the largest eigenvalue of $W^* = \frac{1}{\mu M}(S \cdot S^T)$.
First, note that $S\cdot S^T$ and $S^T \cdot S$ have the same eigenvalues $(\mu M) \nu_i$ but different eigenvectors denoted by $u_i$ for $S \cdot S^T$ and $v_i$ for $S^T \cdot S$. In the basis set spanned by the $u_i \otimes v_j$, we find that $S \cdot S^T \otimes I_d$ and $I_d \otimes S^T \cdot S$ are diagonal with $(\mu M) \nu_i$ as eigenvalues. Therefore, $\|S \cdot S^T \otimes I_d\|_\infty = (\mu M) \nu_m$ and $\|I_d \otimes S^T \cdot S\|_\infty = (\mu M)\nu_m$.
Moreover, observe that
\begin{equation}
(S^T \otimes S)^T \cdot (S^T \otimes S) \cdot(u_i\otimes v_j) = (S\cdot S^T \cdot u_i) \otimes (S^T \cdot S \cdot v_j) =(\mu M)^2 \nu_i \nu_j\  u_i \otimes v_j
\end{equation}
Therefore, $(S^T \otimes S)^T \cdot (S^T \otimes S) =(\mu M)^2 {\rm diag}(\nu_i \nu_j)$. In other words, $S^T \otimes S$ is the composition of an orthogonal operator (i.e. an isometry) and a diagonal matrix. Immediately, it follows that $\|S^T \otimes S\| \leq (\mu M)\nu_{m}$.

Compute the norm of equation (\ref{eq: necessary condition for eigenvalues of M0})
\begin{equation}
 |(\lambda + \eps \mu)(\lambda + 1)| \leq S'_m(|\lambda| + 3 \eps \mu) \nu_m .
\label{eq: second necessary condition for eigenvalues of M0}
\end{equation}
Define $f_\eps: \C \rightarrow \R$ such that $f_\eps(\lambda) = |(\lambda + \eps \mu)||(\lambda + 1)| - (|\lambda| + 3 \eps \mu) S'_m \nu_m$. We want to find a condition such that $f_\eps(\C_+)>0$, where $\C_+$ is the right half complex plane. This condition on $\eps,\ \mu,\ \nu_m,\ \text{ and }S'_m$ will be a sufficient condition for linear stability. Indeed, under this condition we can show that only eigenvalues with a negative real part can meet the necessary condition (\ref{eq: second necessary condition for eigenvalues of M0}). Complex number of the right half plane cannot be eigenvalues and thus the system is stable. The case $\eps = 0$ tells us that $f_0(\C_+)>0$ if $1>S'_m \nu_m$, compute
$$\frac{\partial f_\eps}{\partial \eps}(\lambda) = \mu(\Re(\lambda) + \mu \eps)\frac{|(\lambda + 1)|}{|(\lambda + \eps \mu)|} - 3 \mu S'_m \nu_m$$
If $1 \geq \eps \mu$, which is most probably true given that $\eps << 1$, then $\frac{|(\lambda + 1)|}{|(\lambda + \eps \mu)|} \geq 1$. Assuming $\lambda \in \C_+$ leads to:
$$\frac{\partial f_\eps}{\partial \eps}(\lambda) \geq \mu(\mu\eps - 3 S'_m \nu_m) \geq \mu(1 - 3 S'_m \nu_m) $$
Therefore, the condition $3S'_m \nu_m<1$, which implies $S'_m \nu_m < 1$, and leads to  $f_\eps(\C_+) > 0$.

 \subsection{A very short introduction to computational cohomology}
 \label{part: persistent cohomology}
In algebraic topology, topological spaces (which are continuous objects) can be classified by roughly counting their number of holes. This coordinate-invariant description of a topological space is called its homology (or cohomology, the difference between them is beyond the scope of this paper). In fact the homology can be summarized by giving the betti numbers of the topological state. The sequence of betti number is made of positive integers. The first three betti numbers have the following definition: the first is the number of connected components, the second is the number of two-dimensional or ``circular'' holes and the third is the number of 3-dimensional holes or ``voids''. See \cite{hatcher2002algebraic} for a more rigorous approach.

However, in the example of toroidal retinotopy (see section 5.3.2 and figure~\ref{fig: torus positions}). we are dealing with a discrete cloud of points. Therefore, one needs to extend the definition of the betti numbers to discrete objects in order to find the underlying topology of the space within which the points are distributed. This is called computational or persistent cohomology. One reconstructs the topological space by considering balls of a given radius centered on each point in the cloud. For each radius (the abscissa of the right picture of figure~\ref{fig: torus positions}), one can compute the betti numbers of the resulting topological space. A barcode graph, e.g. the right picture of figure~\ref{fig: torus positions}, is constructed by drawing a horizontal bar for each connected component, 2-dimensional hole or 3-dimenional void etc. This is done for a range of radii. Finding the persistent cohomology of a cloud of points consists in observing the set of betti numbers that are stable through a significantly large range of radii. One then assumes that the points most likely lie on a topological space of a given homology if the corresponding betti numbers are stable enough. See \cite{zomorodian2005computing} for details.

In section \ref{part: torus}, we used the Jplex software package of \cite{plex} to compute the barcodes of the points corresponding to learning from inputs uniformly distributed over a 2-dimensional torus. We used 200 landmarks spread according to the maxminlandmark method to build simplices in Jplex, which returned the maximum radius (beyond which all the betti numbers except the first vanish) we used in the simulation. In figure \ref{fig: torus positions}, we see that for a wide range of radii the triplet $(1,2,1)$ is stable. This corresponds to a 2 dimensional torus.

 \bibliographystyle{apalike}

\end{document}